\newcommand{\N}{\mathbb{N}}
\newcommand{\R}{\mathbb{R}}
\newcommand{\Ac}{\mathcal{A}}
\newcommand{\Bc}{\mathcal{B}}
\newcommand{\Ec}{\mathcal{E}}
\newcommand{\Fc}{\mathcal{F}}
\newcommand{\Sc}{\mathcal{S}}
\newcommand{\Yc}{\mathcal{Y}}
\newcommand{\E}{\mathbb{E}}
\newcommand{\I}{\mathbb{I}}
\newtheorem{definition}{Definition}
\newtheorem{proposition}{Proposition}
\newtheorem{corollary}{Corollary}
\title{A Multi-Armed Bandit Approach for Online Expert Selection in Markov Decision Processes}
\author{
  Eric Mazumdar\\
  Department of Electrical Engineering and
  Computer Sciences \\
  University of California, Berkeley \\
  Berkeley, CA 94720\\
  \texttt{emazumdar@eecs.berkeley.edu} \\
  \And
  Roy Dong \\
  Department of Electrical Engineering and
  Computer Sciences \\
  University of California, Berkeley \\
  Berkeley, CA 94720\\
  \texttt{roydong@eecs.berkeley.edu}\\
  \And
  Vicen\c{c} R\'ubies Royo \\
  Department of Electrical Engineering and
  Computer Sciences \\
  University of California, Berkeley \\
  Berkeley, CA 94720\\
  \texttt{vrubies@berkeley.edu}\\
  \And
  Claire Tomlin \\
  Department of Electrical Engineering and
  Computer Sciences \\
  University of California, Berkeley \\
  Berkeley, CA 94720\\
  \texttt{tomlin@eecs.berkeley.edu}\\
  \And
  S. Shankar Sastry \\
  Department of Electrical Engineering and
  Computer Sciences \\
  University of California, Berkeley \\
  Berkeley, CA 94720\\
  \texttt{sastry@eecs.berkeley.edu}
}
\begin{document}

\maketitle

% make the title area

\begin{abstract}
    % !TEX root = 2017-arXiV.tex

We formulate a multi-armed bandit (MAB) approach to choosing expert policies online in Markov decision processes (MDPs). Given a set of expert policies trained on a state and action space, the goal is to maximize the cumulative reward of our agent. The hope is to quickly find the best expert in our set. The MAB formulation allows us to quantify the performance of an algorithm in terms of the regret incurred from not choosing the best expert from the beginning. We first develop the theoretical framework for MABs in MDPs, and then present a basic regret decomposition identity. We then adapt the classical Upper Confidence Bounds algorithm to the problem of choosing experts in MDPs and prove that the expected regret grows at worst at a logarithmic rate. Lastly, we validate the theory on a small MDP.

\end{abstract}

%%%%%%%%%%%%%%%%%%%%%%%%%%%%%%%%%%%%%%%%%%%%%%%%%%%%%%%%%%%%%%%%%%%%%%%%%%%%%%%%
\section{Introduction}
\label{sec:intro} 
% !TEX root = 2017-arXiV.tex

Artificial intelligence (AI) and machine learning (ML) algorithms are achieving great performance controlling systems too complicated to model. For example, deep Q-networks have been able to solve several Atari games~\citep{Minh_2013}. Similarly, deep reinforcement learning has also been able to tackle hard problems in robotics such as robot locomotion and grasping \citep{Schulman_15_1,Schulman_15_2,levine2016end}, tasks which are too complex to model due to the large number of degrees of freedom and/or the discontinuous dynamics that are present. 

%The application of deep neural networks to problems in reinforcement learning (RL) has had many well known and documented successes, not least of which is the success of deep Q-networks in solving Atari games~\citep{Minh_2013}. However, convolution neural networks have been shown to be particularly susceptible to perturbations in the input images~\citep{Dodge2016}. On top of this, the main goal of reinforcement learning has been largely to achieve expert performance on a given task, meaning that a certain amount of over-fitting to the task at hand occurs during training TODO FIND CITATION. These two phenomena means that the performance of learned policies are often not robust to changes in the underlying environment, which makes taking RL control policies from simulation to real life remains an unsolved problem.

Generally speaking, these systems achieve good performance when operating in conditions similar to the conditions of the training data. Our research presents methods for when the operating conditions are unknown and/or varying, and investigates guarantees on the performance of these methods.

In particular, we suppose that we have several batches of training data from several different operating conditions. We train a different learner for each operating condition: the trained learner will be referred to as an \emph{expert}. Intuitively, each expert is designed to perform well in one set of operating conditions.

In this situation, there are two natural questions that arise. First, if the operating conditions of the environment match the operating conditions of one batch of training data, can the controller identify this, and choose the appropriate expert? 
Second, if the environment presents operating conditions unlike those previously seen, can the controller choose an expert that still achieves the best performance possible?

One of the motivating examples for our work is the amazing performance that deep learning algorithms are achieving in the wild, as mentioned above.
Additionally, there is some promising research proving some of the performance properties of neural networks. For example, under some assumptions on the positive homogeneity of the component functions,~\citet{Haeffele2015} showed that all local optima are also global optima. As another example, researchers have recently shown that the structure of neural nets are expressive enough to show a large variety of functions~\citep{Raghu2017,Zhang2017}. 
However, all these results require that the training data is representative of the test data: if a dynamical system changes its operating conditions, there is little to ensure that performance will be maintained.

In this paper, we take a conjectural approach to proving good performance: supposing our learning algorithms do well in the operating conditions in which they were trained, how can we guarantee performance under hidden or changing operating conditions? 
In a sense, by integrating several experts into one controller and intelligently picking an expert should provide some form of robustness in some sense. In particular, we would like the systems we design in the future to be able to respond to sensor failures, random occlusions, and unexpected behavior, even if the system cannot directly detect these failure modes.

To such end, we present this preliminary work. We formulate a multi-armed bandit approach for online expert selection, where each expert provides a control policy for a Markov decision process (MDP). In contrast to classical multi-armed bandit problems, there is strong coupling between the states and the rewards of the Markov decision process, so one can not freely `switch' between experts without repercussions.

We provide an extension to the multi-armed bandit approach that handles this: at iteration $n$, we continue to use the same expert for $T_n$ time steps in the MDP. We are able to show that, under some ergodicity assumptions, the transient effects of previous experts die out for $T_n$ sufficiently large, and we can find a regret decomposition identity for the multi-armed bandit approach to MDPs, with an additional additive error term that is proportional to the mixing rates of the ergodic processes. As one example, we show that the Upper Confidence Bound algorithm can be used, and derive new regret bounds in the MDP setting.

In order to prove bounds the regret of our controller, we will have to assume that the operating conditions of the test data are fixed, and are represented in our training data. 
We hope to address these concerns in future work, but note that we empirically observe good performance in more general situations.

% !TEX root = 2017-arXiV.tex

\subsection{Notation}

For finite sets $\Ac$, we let $|\Ac|$ denote the cardinality. 
For any two sets $\Ac, \Bc$, we let $\Ac^\Bc$ denote the set of functions from $\Bc$ to $\Ac$. If $\Bc$ is finite, this can also be seen as a vector of size $|\Bc|$ with elements in $\Ac$.

For a set $\Ac$, we let $\Delta(\Ac)$ denote the set of probability distributions across $\Ac$. Throughout this paper, $\Ac$ will either be finite, in which case any subset will be considered measurable, or $\Ac \subset \R$, in which case the measurable sets will simply be the Borel subsets of $\Ac$.

If $X$ is a random element taking values in some set $\Ac$ and $\mu \in \Delta(\Ac)$, we will write $X \sim \mu$ to denote that $X$ is distributed according to $\mu$. Similarly, for random elements $X, Y$, we write $Y \vert X \sim P(X, \cdot)$ to state that the conditional distribution of $Y$ given $X$ follows a stochastic kernel $P$ evaluated at the realization of $X$. Alternatively, we can write this as $\Pr(Y \in A \vert X) = \int_A P(X,dy)$. 

For a random event $A$, we let $\I\{A\}$ indicate the random variable that equals 1 for all realizations in $A$ and 0 for all realizations not in $A$. Also, $\E X$ will denote the expected value of $X$, and $\E[X \vert Y]$ to be the conditional expectation of $X$ given $Y$, as defined in~\citep{Kallenberg2002}. We will also follow the common convention where the brackets of arguments in measures and stochastic kernels will be dropped when context is clear, e.g. $\Pr(X \geq t)$ instead of $\Pr(\{X \geq t\})$ and $P(s,a,s')$ instead of $P((s,a),\{s'\})$.

We let $\N$ denote the set of natural numbers $\{1,2,\dots\}$.

For a sequence of elements $a_0,a_1,\dots$, we will often write $a_{t_1:t_2}$ to denote $(a_{t_1},a_{t_1+1},\dots,a_{t_2})$. We will also generally use 0-indexing throughout this paper.

\section{Background}
\label{sec:background}

% !TEX root = 2017-arXiV.tex

The classic multi-armed bandit (MAB) problem was first described by~\citet{Robbins1952}, and provided the simplest framework to study the exploration vs. exploitation trade-off in sequential decision problems. Despite the apparent simplicity of the problem formulation, multi-armed bandit algorithms have proven to be extremely useful in practice~\citep{simon1989optimal, rothschild1974two, st2011online}, while often also providing strong theoretical guarantees of performance in terms of regret bounds~\citep{Hoeffding1963, Bubeck2012}.

In the context of sequential decision problems, another commonly used model is a MDP, where the classic MAB setting is a special case\footnote{The multi-armed bandit problem can be thought of as a one state MDP.}. Just as in the MAB formulation, the goal is to maximize the accumulated reward over some predefined number of steps. The main difference, however, stems from the fact that MDPs also includes coupling across time through its states and dynamics.

We've seen in recent work that MDPs have come into their own as models of complex systems, in particular those arising in AI and robotics \citep{Schulman_15_1,Schulman_15_2,Minh_2013} where states and dynamics need to be accounted for. A large body of literature has developed in learning good controllers for these MDP systems. 
Thus, perhaps counter-intuitively, in this work, we consider, a MAB approach to controlling MDPs. This is motivated by the fact that there is a large body of literature that can learn policies for control, each of which may be more or less effective in different operational conditions. 

Our work seeks to address the question: if we are given a set of a set of experts (or policies), can we decide at run-time which expert is the (overall) best fit to interact with the environment?

This approach relies on the fact that learning control policies for high-dimensional and complex systems is often very data-intensive, despite the large amount of recent work being conducted on speeding up the convergence of reinforcement learning algorithms ~\citep{Osband2016,Mnih2016}. Given this over-head for training new policies, we seek to make use of pre-trained policies to quickly achieve good, but perhaps not optimal performance.

The idea of expert selection for MDPs has been explored before. MABs are used to select the best quality grasp for robotic applications in~\citep{Laskey2015}. Similarly,~\citet{Matikainen} uses MABs to select among a large set of state machines to pick the best for a task involving maximum area coverage. In~\citep{Hamrick2017}, expert selection is used for meta-control in MDPs. 

The idea of expert selection in MDPs is also similar to the idea of restless bandits, where the underlying distribution of each arm evolves according to separate Markov processes that are independent of the pulls ~\citep{Bubeck2012}. Recent work has been conducted in deriving bounds on the regret of varying algorithms including the UCB algorithm in this setting ~\citep{Ortner2012,Tekin2012}. The critical difference in this case lies in the fact that in our setting, each pull of an arm (equivalently choice of an expert) effects the underlying dynamics which are also shared across all arms.

Thus, while MAB approaches for expert selection in MDPs tend to focus on isolated expert executions as the metric of performance for each bandit, in this work we define a time horizon $T$ of execution and equate the action of pulling the arm of one of the bandits, to executing a specific expert on the MDP for $T$ steps. In our work we devise a version of the UCB algorithm and show that this online expert selection paradigm, despite violating the assumption of independence between successive pulls, can be seen as an ``almost bandit" problem. We also provide regret bounds as well as some examples on simple MDPs.

%The idea of combining the MAB and MDP frameworks has not been widely explored. For instance, \cite{•} was the 

\section{Problem Formulation and Analysis}
\label{sec:problem}

% !TEX root = 2017-arXiV.tex

In this section, we formalize the problem under consideration. We first state the problem we are looking to address. We then describe the multi-armed bandit (MAB) formulation for Markov decision processes (MDPs), presenting a general regret decomposition identity for MAB algorithms in MDPs. Finally we extend the Upper Confidence Bounds (UCB) algorithm to the case of multi-armed bandits in Markov decision processes and show that the algorithm achieves logarithmically growing regret.

\subsection{Problem Statement} 

Our problem can be seen as the sequential interactions between a \emph{controller} and an \emph{environment}. In advance, both agents are given an \emph{observation} space $\Yc$ and an \emph{action} space $\Ac$. Upon initialization, the environment gives an observation $y_0 \in \Yc$ to the controller. Then, the controller picks an action $a_0 \in \Ac$ and issues this to the environment. The environment responds with an \emph{instantaneous reward} $r_0 \in \R$ and an observation $y_1 \in \Yc$. This repeats, with the controller choosing $a_t$ and the environment replying with $(r_{t},y_{t+1},)$. Throughout this paper we assume the rewards are bounded, and, without further loss of generality, suppose that $r_t \in [0,1]$ for all $t$.

The controller, in our formulation, receives suggested actions from a finite set of \emph{experts}. Let the set of experts be denoted $\Ec$, and let $|\Ec| = N$. At each $t$, the controller receives a suggested action from each expert; let the suggestion of expert $e \in \Ec$ at time $t$ be denoted $\hat a_t^e \in \Ac$, and the vector of suggested actions as $\hat a_t = (\hat a_t^e)_{e \in \Ec} \in \Ac^{\Ec}$.

At time $t$, the controller's action $a_t$ can depend on $(a_{0:t-1}, r_{0:t-1},y_{0:t}, \hat a_{0:t})$. Similarly, at time $t$, the environment's instantaneous reward and observation, $(r_t,y_{t+1})$, can depend on $(a_{0:t},y_{0:t},r_{0:t-1})$.\footnote{In control theory, this is referred to as a \emph{causal environment}, i.e. the environment's behavior at time $t$ cannot depend on the inputs given at future times.}

The goal of our controller is to achieve high \emph{cumulative rewards}, which shall be formally developed in the sequel. We shall provide an algorithm for picking $a_t$ to achieve such an end. The environment follows its own prerogative, whatever that may be. 

Note that, in this formulation, we suppose the controller has minimal information on the behavior of the environment. In fact, most of its information about what to do is mediated by the suggestions of the experts. Thus, rather than explicitly picking actions at each time step, we design a controller that instead decides which expert to listen to at each time step. 
As such, the controller chooses an expert $e_t$ and takes action $a_t = \hat a_t^e$.\footnote{In the sequel, we will note that it is important to track which expert told us to take $e_t$; in other words, if $e \neq e'$, then choosing $e_t = e, a_t = \hat a_t^e$ and choosing $e_t = e', a_t = \hat a_t^{e'}$ are different. This is the case even if $\hat a_t^e = \hat a_t^{e'}$.} 
We evaluate our controller's performance by considering how quickly it starts to listen to the expert who gives the best advice, in terms of rewards received, and the growth of the controller's \emph{regret}. Regret intuitively refers to the difference between the actual rewards received and the rewards the controller could have received, in retrospect. We will formally define regret in the analysis to follow.

% !TEX root = 2017-arXiV.tex

\subsection{Environment and Expert Assumptions}
We begin our formal construction of the multi-armed bandit in MDPs framework by stating our assumptions on both the environment and our set of experts.

\subsubsection{Assumptions on the environment}
For analysis, we will assume our environment is a finite \emph{Markov decision process (MDP)}. That is, our environment can be characterized by a finite set of \emph{states} $\Sc$. At time $0$, our environment is randomly distributed across states according to some initial probability distribution $\mu_0 \in \Delta(\Sc)$. Then, at each time step $t$, the environment transitions states according to some stochastic kernel $P$, such that $P(s,a,s')$ denotes the probability of being in state $s'$ at time $t+1$ given that the state at time $t$ is $s$ and the action taken by the controller at time $t$ is $a$. In other words, for each $s \in \Sc$ and $a \in \Ac$, we have that $P(s,a,\cdot) \in \Delta(\Sc)$. We let the sequence of realized states be denoted $(s_0,s_1,\dots)$, where $s_0 \sim \mu_0$, $s_1 \vert s_0, a_0 \sim P(s_0,a_0,\cdot)$, and so on, with the pattern $s_{t+1} \vert s_{0:t},a_{0:t} = s_{t+1} \vert s_{t},a_{t} \sim P(s_{t}, a_{t}, \cdot)$.

Further, we assume that the rewards are stochastic in nature and there is stochastic kernel $R$ such that $R(s,a,s',\cdot) \in \Delta([0,1])$ for every state $s \in \Sc$, action $a \in \Ac$, and next state $s' \in \Sc$. The rewards issued by the environment at each time $t$ are drawn according to this kernel: $r_t \vert s_{0:t+1},a_{0:t} = r_t \vert s_{t},a_{t},s_{t+1} \sim R(s_{t},a_{t},s_{t+1},\cdot)$. 

Lastly, this MDP is partially observed. There is a stochastic kernel $O$ such that $O(s,\cdot) \in \Delta(\Yc)$ for each state $s \in \Sc$. The observation is distributed as $y_t \vert s_{0:t} = y_t \vert s_t \sim O(s_t,\cdot)$.

\subsubsection{Assumptions on the experts}
We assume each expert $e \in \Ec$ has some mapping from the observation $y_t$ to a recommended action $\hat a_t^e$.\footnote{We note that this mapping can be random, so long as the stochasticity is independent of all the other processes thus mentioned.} 
We note that, by the structural assumptions on the MDP, this can be reduced to an the expert providing a stationary policy on the MDP. In particular, for expert $e \in \Ec$, we can let his stationary policy be denoted $\pi_e : \Sc \rightarrow \Delta(\Ac)$, which, sometimes, by mild abuse of notation, will also be thought of as $\pi_e : \Sc \times \Ac \rightarrow [0,1]$.\footnote{We can calculate this explicitly. If we let $f : \Yc \rightarrow \Ac$ denote the expert's mapping, then $\pi_e(s,a) = \int \I\{f(y) = a\} O(s,dy)$, which is the probability of taking actions given a state, averaged across the distribution across observations.} 

Note that each stationary policy induces a Markov chain.\footnote{We can simply take $\tilde P_e(s,s') = \sum_{a \in \Ac} P(s,a,s') \pi_e(s,a)$ to uncover the transition dynamics of this induced Markov chain.} We will assume that the Markov chain induced by each expert is aperiodic and irreducible, and therefore has a unique stationary distribution, as we will see in Proposition~\ref{prop:ergodic}.
%~\cite{Kallenberg2002}. We will let $\mu^e$ denote this stationary distribution.

\subsection{The Multi-Armed Bandit Approach to Choosing Experts in MDPs}

We now outline the general framework for choosing the best expert out of a set of experts in an MDP. We call this framework the multi-armed bandit approach to MDPs, and we summarize it in Algorithm \ref{alg:MABMDP}.\footnote{We quickly note that there are multiple time-scales occurring in this framework. There are the time steps of the MDP, which we will typically denote with $t$. There are also the iterations of the multi-armed bandit, which we will often index with $m$ or $n$. We write $t_n=\sum_{m=0}^{n-1} T_m$ to be the total time elapsed in the MDP by the $n$th iteration of the multi-armed bandit algorithm.}

First, the algorithm must pick an expert $e \in \Ec$. Let $t_1^e$ denote the time step of the MDP when expert $e$ is first chosen, $t_2^e$ the second, and so on for $t_k^e$. We will define $t_k^e = \infty$ if expert $e$ is not chosen a $k$th time.\footnote{It is interesting to note that $t_k^e$ is a random stopping time, whose realizations depend on the bandit algorithm in use. However, this will not affect the analysis.} When the expert is chosen at time $t_k^e$, we run this expert for $T^e_k$ time steps, incurring average cumulative reward $R_k^e = \frac{1}{T^e_k} \sum_{t = t_k^e}^{t_k^e+T^e_k-1} r_t$. Note that we choose the same expert for $T^e_k$ time steps, i.e. $e_t = e_{t_k^e}$ for $t \in \{t_k^e,t_k^e+1,\dots,t_k^e+T^e_k-1\}$. Once the chosen expert has been followed for $T^e_k$, we make another choice of expert, and run that expert for its specified time horizon. We then receive the new expert's average cumulative reward, and the algorithm repeats.

\begin{algorithm}[]
 % \algsetup{linenosize=\tiny}
  \caption{Multi-Armed Bandit Approach to MDPs}
  \begin{algorithmic}[1]
%  \algsetup{linenosize=\small}
  	\State \textbf{Input:} A set of experts $\Ec$, an environment modeled as an MDP, a sequence of times $\{T_n \in \N, n\ge0\}$
  	\State $t_0 \gets 0$
  	\State Initialize environment: $s_0 \sim \mu_0$
  	\State Receive observation $y_0$
 	\For{$n=0,1,2,...$}
		\State Make a choice of expert: $e_n \in \Ec$ 
		\For{$t=t_n$ to $t_{n}+T_n-1$}
			\State Receive vector of actions $\hat a_t$ from the experts
			\State Choose action $\hat a_t^{e_n}$ from $\hat a_t$ 
			\State Apply $\hat a_t^{e_n}$ to environment, receive reward and observation $(r_t,y_{t+1})$
			\State Give observation $y_{t+1}$ to the experts
		\EndFor
		\State $t_{n+1} \gets t_n+T_n$
 	\EndFor
	\end{algorithmic}
	\label{alg:MABMDP}
\end{algorithm}

Given the general framework we now show that under some mixing conditions for the Markov chains induced by each expert, this problem is, in some sense, `almost' a classical multi-armed bandit problem.  

Before we begin our theoretical treatment of the problem in earnest, we note that, for easy reference, we have summarized the most important notation in Figure~\ref{fig:notation_table}. 

First, we recall a convergence property for finite-state Markov chains:
\begin{proposition}[Ergodicity~\cite{Kallenberg2002}]
\label{prop:ergodic}
If the induced Markov chains for each expert $e \in \Ec$ are irreducible and aperiodic, then there exist a unique stationary distribution $\mu_e$. 

Furthermore, let $\tilde P_e$ denote the stochastic kernel of the induced Markov chain, and let $\tilde P_e^t$ denote the $t$-times composition of this stochastic kernel. Then, there exists constants $C_e, \alpha_e$ such that:
\begin{equation}
\max_{\mu \in \Delta(\Sc)} \left\| \sum_{s\in\Sc}\tilde  P_e^t(s,\cdot)\mu(s) - \mu_e(\cdot) \right\|_1 \leq C_e \alpha_e^t
\end{equation}
\end{proposition}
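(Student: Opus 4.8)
The plan is to reduce the entire statement—existence, uniqueness, and the geometric rate—to a single contraction estimate via the Doeblin/Dobrushin approach rather than a spectral one, since the contraction simultaneously delivers all three conclusions with explicit constants. First I would observe that the map $\Phi_e : \Delta(\Sc) \to \Delta(\Sc)$ defined by $\Phi_e(\mu)(\cdot) = \sum_{s \in \Sc} \tilde P_e(s,\cdot)\mu(s)$ is exactly the one-step push-forward appearing in the statement, that its $t$-fold composition satisfies $\Phi_e^t(\mu)(\cdot) = \sum_{s \in \Sc} \tilde P_e^t(s,\cdot)\mu(s)$, and that $\Delta(\Sc)$, being a closed bounded subset of $\R^{|\Sc|}$, is a complete metric space under the $\ell_1$ metric. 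The target then reduces to showing that some iterate of $\Phi_e$ contracts in $\ell_1$.

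The key lemma, and the main obstacle, is primitivity: there exists $m_e \in \N$ with $\tilde P_e^{m_e}(s,s') > 0$ for every pair $s, s' \in \Sc$ simultaneously. This is where both hypotheses are genuinely needed—irreducibility alone does not suffice, as a periodic chain has no strictly positive power. I would prove it by combining irreducibility (for each pair $s,s'$ there is a positive-probability path of some length) with aperiodicity (the greatest common divisor of the return times to any fixed state is $1$), using the numerical-semigroup fact that a set of positive integers with greatest common divisor $1$ contains every sufficiently large integer as a nonnegative integer combination; finiteness of $\Sc$ then lets me choose a single $m_e$ valid uniformly over all pairs. Setting $\epsilon_e = \min_{s,s'} \tilde P_e^{m_e}(s,s') > 0$ yields the Doeblin minorization $\tilde P_e^{m_e}(s,s') \geq \epsilon_e$ for all $s,s'$.

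From the minorization the contraction is routine: writing each row of $\tilde P_e^{m_e}$ as $|\Sc|\epsilon_e\, U$ (with $U$ the uniform distribution) plus a nonnegative remainder, the common component cancels in differences, giving $\|\Phi_e^{m_e}(\mu) - \Phi_e^{m_e}(\nu)\|_1 \leq \beta_e \|\mu - \nu\|_1$ with $\beta_e = 1 - |\Sc|\epsilon_e < 1$. Banach's fixed-point theorem applied to $\Phi_e^{m_e}$ produces a unique fixed point $\mu_e$; since $\Phi_e(\mu_e)$ is also fixed by $\Phi_e^{m_e}$ (as $\Phi_e^{m_e}$ and $\Phi_e$ commute), uniqueness forces $\Phi_e(\mu_e) = \mu_e$, establishing existence and uniqueness of the stationary distribution.

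Finally, for the rate I would split $t = q m_e + r$ with $0 \leq r < m_e$, apply the contraction $q$ times to the probability vector $\Phi_e^r(\mu)$, and use $\|\nu - \mu_e\|_1 \leq 2$ for any $\nu \in \Delta(\Sc)$, obtaining $\max_{\mu}\|\Phi_e^t(\mu) - \mu_e\|_1 \leq 2\beta_e^{q} \leq (2/\beta_e)(\beta_e^{1/m_e})^t$. Setting $C_e = 2/\beta_e$ and $\alpha_e = \beta_e^{1/m_e} \in (0,1)$ gives the claimed bound. I expect the uniform primitivity exponent to be the only nontrivial step; the remainder is bookkeeping. An alternative route through the Perron--Frobenius theorem (eigenvalue $1$ simple, all others of modulus strictly below $1$, with Jordan blocks decaying) reaches the same conclusion, but requires extra care to absorb polynomial-in-$t$ factors into $\alpha_e$, which is why I favor the contraction argument.
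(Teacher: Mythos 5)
Your proof is correct, but there is no in-paper argument to compare it against: the paper states this proposition as a known result with a citation to Kallenberg (2002) and supplies no proof of its own. What you have written is a valid, self-contained derivation, and each step checks out: the uniform primitivity exponent $m_e$ (irreducibility plus aperiodicity plus finiteness of $\Sc$, via the numerical-semigroup fact applied to the additively closed set $\{n : \tilde P_e^n(s,s) > 0\}$), the Doeblin minorization $\tilde P_e^{m_e}(s,s') \geq \epsilon_e$, the $\ell_1$-contraction with coefficient $\beta_e = 1 - |\Sc|\epsilon_e$ --- which works precisely because $\mu - \nu$ has zero total mass so the common component $|\Sc|\epsilon_e U$ cancels, a point worth stating explicitly --- the commutation argument upgrading the unique fixed point of $\Phi_e^{m_e}$ to a fixed point of $\Phi_e$, and the division $t = q m_e + r$ giving $C_e = 2/\beta_e$ and $\alpha_e = \beta_e^{1/m_e}$ (your constants also cover $t < m_e$, since $2 \leq 2\beta_e^{t/m_e - 1}$ when $\beta_e < 1$). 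Compared with the spectral route the paper implicitly leans on in its experiments --- where $\alpha_e$ is taken to be the second-largest eigenvalue of $\tilde P_e$ --- your contraction argument buys elementarity and fully explicit constants with no Jordan-block or diagonalizability bookkeeping, at the price of constants that can be far more conservative (your $\alpha_e = \beta_e^{1/m_e}$ is typically much closer to $1$ than the spectral gap would give). One small touch-up: in the degenerate case $|\Sc|\epsilon_e = 1$ (all rows of $\tilde P_e^{m_e}$ uniform) you get $\beta_e = 0$ and $C_e = 2/\beta_e$ is undefined; either shrink $\epsilon_e$ slightly so that $\beta_e \in (0,1)$, or dispose of that case directly, since convergence is then exact after $m_e$ steps.
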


We begin our theoretical treatment by defining each expert's expected reward from its steady-state distribution $\mu_e$, $\bar R^e$:

\[
\bar R^e = \sum_{s \in \Sc} \sum_{a \in \Ac} \sum_{s' \in \Sc} \int r R(s,a,s',dr) P(s,a,s') \pi_e(s,a) \mu_e(s)\ \ \  \ \ \forall e \in \Ec
\] 
  
Having defined $\bar R^e$, we now give our definition of regret in the context of our problem:\\
\
\
\\

\begin{definition}[Regret definition for the multi-armed bandit approach to MDPs]
\label{def:regret}
We define the best expert as $e^* =\arg\max_{e \in \Ec} \bar R^e$, the expert which yields the highest expected average cumulative reward $\bar R^e$ from its steady-state distribution $\mu_e$. Finally we let $\bar R^* = \bar R^{e^*}$.

The cumulative regret, $r(n)$ after $n$ iterations of a multi-armed bandit algorithm is defined as:

 \[r(n)=n\bar R^* -  \sum_{k = 0}^{n}\frac{1}{T_k}\sum_{t=t_k}^{t_{k+1}-1} r_t\]
\end{definition}

%%%%%%%%%%%%%%%%%%%%%%%%%%%%%%%%%%%%%%%%%%%%%%%%%%%%%%%%%%%%%%%%%%%%%%%%%%%%%%%%%%%%%%%%%%%%%%%%%%%%%%%%%%%%%%%%%%%%%%%%%%%%%%%%%%%%%%%%%%%%  Table

\begin{figure}[!ht]
\begin{center}

\begin{tabular}{|| p{1in} | p{4in} ||}
\hline
Variable 		& Interpretation \\
\hline
$y_t$ 		& The observation given by the environment at time $t$, taking values in $\Yc$. \\
$a_t$ 		& The action chosen by the controller at time $t$, taking values in $\Ac$. \\
$r_t$ 		& The reward given by the environment at time $t$, taking values in $[0,1]$. \\
$\Ec$ 		& The set of experts, also often referred to as bandits. \\
$\hat a_t^e$	& The advice of expert $e \in \Ec$ at time $t$. \\
$e_t$ 		& The expert chosen by the controller at time $t$. \\
$P(s,a,\cdot)$ 	& The stochastic kernel dictating the transition probabilities of our MDP. \\
$R(s,a,s',\cdot)$ 	& The stochastic kernel dictating the distribution across rewards. \\
$O(s,\cdot)$ 	& The stochastic kernel dictating the distribution across observations. \\
$\pi_e$ 		& The stationary policy induced by expert $e \in \Ec$. \\
$\mu_e$ 		& The stationary distribution of the Markov chain induced by $\pi_e$. \\
$T_n$ 			& The number of MDP time steps in the expert chosen at the $n$th iteration of the multi-armed bandit algorithm is used for. \\
$t_n$           & The time step of the MDP when the algorithm makes its $n$th choice of expert to follow. $t_n=\sum_{m=0}^{n-1}T_m$ \\
$t_k^e$ 		& The time step of the MDP when expert $e \in \Ec$ is chosen for the $k$th time. \\
$T_k^e$         & The number of MDP time steps expert $e$ is used for when it is called for the $k$th time\\
$R_k^e$ 		& The average cumulative reward from expert $e \in \Ec$ the $k$th time it is chosen, given by $\frac{1}{T} \sum_{t = t_k^e}^{t_k^e+T^e_k-1} r_t$. \\
$\Fc_k^e$ 	& The filtration of our constructed martingale for expert $e \in \Ec$. \\
$S_k^e$ 		& The martingale for expert $e \in \Ec$. \\
$C_e, \alpha_e$ 	& The mixing constants of the Markov chain induced by expert $e$. \\
$\bar R^e$ 	& The expected steady-state reward for expert $e$. \\
$e^*, \bar R^*$ 	& The best expert and the expected reward of the stationary distribution of the best expert. \\
$C_*, \alpha_*$ & The mixing constants for the best expert $e^*$. \\
$K_e$ 		& A placeholder term, defined as $K_e = \frac{C_e}{1 - \alpha_e}$. \\
$T_e(n)$		& The number of times expert $e \in \Ec$ has been called by the $n$th iteration of the multi-armed bandit algorithm. \\
$\Delta_e$ 	& The expected regret of expert $e$, given by $\Delta_e = \bar R^* - \bar R^e$. \\
\hline
\end{tabular}

\end{center}
\caption{A table outlining the notation used throughout this paper.}
\label{fig:notation_table}
\end{figure}

%%%%%%%%%%%%%%%%%%%%%%%%%%%%%%%%%%%%%%%%%%%%%%%%%%%%%%%%%%%%%%%%%%%%%%%%%%%%%%%%%%%%%%%%%%%%%%%%%%%%%%%%%%%%%%%%%%%%%%%%%%%%%%%%%%%%%%%%%%%% Convergence property of Markov Chains

Having defined the important terms, we now present the necessary building blocks for a general regret decomposition for the multi-armed bandit in MDPs problem. 
In essence, we know that if our controller exclusively listened to expert $e \in \Ec$, then the MDP would converge to the stationary distribution $\mu_e$, and earn an expected steady state reward of $\bar R^e$ . Corollary ~\ref{corr:mc_bnd} tells us that, for a large enough time horizon, $T_n$, our observed $R_k^e$ is close to $\bar R^e$.\\

%%%%%%%%%%%%%%%%%%%%%%%%%%%%%%%%%%%%%%%%%%%%%%%%%%%%%%%%%%%%%%%%%%%%%%%%%%%%%%%%%%%%%%%%%%%%%%%%%%%%%%%%%%%%%%%%%%%%%%%%%%%%%%%%%%%%%%%%%%%% Convergence of Rewards to stationary rewards

\begin{corollary}
\label{corr:mc_bnd}
Suppose the induced Markov chains for each expert $e \in \Ec$ are irreducible and aperiodic. Pick any $n$, and let $e = e_{t_n}$ be the expert chosen at the $n$th iteration. We have that:
\[
\left| \bar R^{e} - \E\left[ \frac{1}{T_n}\sum_{t = t_n}^{t_{n+1}-1}r_t \middle| s_{t_n},e \right] \right| \leq \frac{C_e}{T_n} \frac{1 - \alpha_e^{T_n}}{1 - \alpha_e} \text{ almost surely}
\]

We supply the proof in Appendix \ref{sec:cor1}.

\end{corollary}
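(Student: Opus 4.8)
The plan is to rewrite the conditional average reward as a running average of one-step reward expectations propagated by the induced kernel $\tilde P_e$, and then bound its deviation from $\bar R^e$ using the geometric mixing estimate of Proposition~\ref{prop:ergodic}.

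First I would introduce the expected one-step reward collected from state $s$ while following expert $e$,
\[
g_e(s) = \sum_{a \in \Ac}\sum_{s' \in \Sc}\left(\int_{[0,1]} r\, R(s,a,s',dr)\right) P(s,a,s')\,\pi_e(s,a),
\]
which, since all rewards lie in $[0,1]$ and $g_e(s)$ is a convex average of them, satisfies $0 \le g_e(s) \le 1$ for every $s$. Comparing with the definition of $\bar R^e$, one sees immediately that $\bar R^e = \sum_{s \in \Sc} g_e(s)\,\mu_e(s)$, i.e.\ $\bar R^e$ is the expectation of $g_e$ under the stationary distribution $\mu_e$.

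Next I would compute the conditional expectation on the left-hand side. Conditioned on $s_{t_n}$ and on the event that expert $e$ is used throughout the block $t \in \{t_n,\dots,t_{n+1}-1\}$, the actions are drawn from $\pi_e$, so the state sequence $(s_{t_n},s_{t_n+1},\dots)$ is a Markov chain with transition kernel $\tilde P_e$. Hence the conditional law of $s_{t_n+k}$ given $s_{t_n}=s$ is $\tilde P_e^k(s,\cdot)$, and by the tower property $\E[r_{t_n+k}\mid s_{t_n}=s, e] = \sum_{s'' \in \Sc}\tilde P_e^k(s,s'')\,g_e(s'')$. Averaging over the $T_n$ steps and subtracting $\bar R^e$ gives
\[
\bar R^e - \E\!\left[\tfrac{1}{T_n}\!\sum_{t=t_n}^{t_{n+1}-1} r_t \,\middle|\, s_{t_n}=s, e\right] = \frac{1}{T_n}\sum_{k=0}^{T_n-1}\sum_{s'' \in \Sc} g_e(s'')\bigl(\mu_e(s'') - \tilde P_e^k(s,s'')\bigr).
\]
Finally I would bound this by the triangle inequality: using $|g_e(s'')| \le 1$, the inner sum over $s''$ is at most $\|\tilde P_e^k(s,\cdot) - \mu_e(\cdot)\|_1$, which Proposition~\ref{prop:ergodic} (applied to the point mass $\mu = \delta_s$, permitted since the bound there is a maximum over all of $\Delta(\Sc)$) bounds by $C_e\alpha_e^k$. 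Summing the geometric series $\sum_{k=0}^{T_n-1}\alpha_e^k = \frac{1-\alpha_e^{T_n}}{1-\alpha_e}$ and dividing by $T_n$ yields exactly the claimed bound; since this estimate is uniform over the realized value of $s_{t_n}$, it holds almost surely.

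The computation is essentially routine once the setup is correct, so the only real care needed --- and the step I expect to be the main obstacle --- is the second one: justifying that, conditioned on fixing expert $e$ for the entire block, the state process genuinely evolves as the induced chain with $k$-step law $\tilde P_e^k(s,\cdot)$ (this uses the Markov property together with the reduction of the expert to the stationary policy $\pi_e$ from the expert assumptions), and confirming that the reward kernel $R$ integrates out cleanly into the bounded function $g_e$.
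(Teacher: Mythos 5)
Your proposal is correct and takes essentially the same route as the paper's proof in Appendix~\ref{sec:cor1}: your one-step reward function $g_e$ is precisely the inner triple sum appearing in the paper's expansion, and your $k$-step law $\tilde P_e^k(s,\cdot)$ is the paper's recursively propagated distribution $\mu_{t_n+k}$ started from the point mass $\mu_{t_n} = \delta_{s_{t_n}}$. Both arguments then bound the reward term by $1$, invoke Proposition~\ref{prop:ergodic} to obtain the $C_e\alpha_e^{k}$ decay of $\bigl\| \tilde P_e^k(s,\cdot) - \mu_e \bigr\|_1$, and sum the geometric series to get $\frac{C_e}{T_n}\frac{1-\alpha_e^{T_n}}{1-\alpha_e}$, so your write-up matches the paper's proof in all essentials.
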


Now, let $T_e(n) = \sum_{m = 0}^{n-1} \I\{e_{t_m} = e\}$, the number of times the expert $e$ has been chosen up to iteration $n$. Note that this is a random quantity. Additionally, let us define the expected regret of expert $e$ as $\Delta_e = \bar R^* - \bar R^e$. Note that $\Delta_{e^*} = 0$. Finally, we let the time horizon $T_n$, that a chosen expert is played for, to vary over the course of the algorithm. We further require that $T_n\ge T_0$ for all $n$.\\

We now state our main proposition on the expected regret of a multi-armed bandit algorithm applied to an MDP.\\

%%%%%%%%%%%%%%%%%%%%%%%%%%%%%%%%%%%%%%%%%%%%%%%%%%%%%%%%%%%%%%%%%%%%%%%%%%%%%%%%%%%%%%%%%%%%%%%%%%%%%%%%%%%%%%%%%%%%%%%%%%%%%%%%%%%%%%%%%%%% Regret Decomposition

\begin{proposition}[Regret decomposition identity]
\label{prop:rdi}
If the induced Markov chains for each expert $e \in \Ec$ are irreducible and aperiodic, then 
the expected cumulative regret at time $n$ can be bounded with:
\begin{equation}
\E[r(n)]\le \sum_{\substack{e \in \Ec \\ e\ne e^*}} \E[ T_e(n) ] \left[ \Delta_e + \frac{C_e}{T_0(1 -\alpha_e)} \right] + \frac{C_{*}}{1-\alpha_{*}}\sum_{k = 0}^{n-1} \frac{1}{T_k}
\end{equation}
Here, $C_*, \alpha_*$ are the mixing constants for the best expert $e^*$.
\end{proposition}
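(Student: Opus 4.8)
\section*{Proof proposal}

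The plan is to start from Definition~\ref{def:regret}, rewrite the regret as a sum of per-iteration shortfalls, bound each shortfall using Corollary~\ref{corr:mc_bnd}, and then regroup the resulting sum by expert. Writing $\hat R_k = \frac{1}{T_k}\sum_{t=t_k}^{t_{k+1}-1} r_t$ for the average reward collected during iteration $k$, the regret reads $r(n) = \sum_{k=0}^{n-1}\left(\bar R^* - \hat R_k\right)$. Taking expectations and applying the tower property by conditioning on the pair $(s_{t_k}, e_{t_k})$ (the state and the chosen expert at the start of iteration $k$), I would write each summand as $\E\left[\bar R^* - \E\left[\hat R_k \mid s_{t_k}, e_{t_k}\right]\right]$.

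First I would split the inner conditional shortfall into a steady-state gap and a transient gap, $\bar R^* - \E[\hat R_k \mid s_{t_k}, e_{t_k}] = \Delta_{e_{t_k}} + \left(\bar R^{e_{t_k}} - \E[\hat R_k \mid s_{t_k}, e_{t_k}]\right)$. The first piece is exactly the per-expert suboptimality $\Delta_{e_{t_k}}$, and Corollary~\ref{corr:mc_bnd} bounds the second piece by $\frac{C_{e_{t_k}}}{T_k}\frac{1-\alpha_{e_{t_k}}^{T_k}}{1-\alpha_{e_{t_k}}} \le \frac{C_{e_{t_k}}}{T_k(1-\alpha_{e_{t_k}})}$, where the last inequality simply drops the positive term $\alpha_{e_{t_k}}^{T_k}$. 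This yields the clean per-iteration bound $\bar R^* - \E[\hat R_k \mid s_{t_k}, e_{t_k}] \le \Delta_{e_{t_k}} + K_{e_{t_k}}/T_k$ with $K_e = C_e/(1-\alpha_e)$.

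The heart of the argument is then the regrouping of $\sum_{k=0}^{n-1}\bigl(\Delta_{e_{t_k}} + K_{e_{t_k}}/T_k\bigr)$ by expert. Using $\sum_{k=0}^{n-1}\I\{e_{t_k}=e\} = T_e(n)$, the $\Delta$-terms collapse to $\sum_{e\in\Ec}\Delta_e\,T_e(n)$, and since $\Delta_{e^*}=0$ the best-expert contribution drops, leaving $\sum_{e\ne e^*}\Delta_e\,T_e(n)$. For the transient terms I would separate the iterations on which $e^*$ is chosen from the rest. On the $e^*$-iterations I upper-bound by extending the (nonnegative) sum over all iterations, giving $K_{e^*}\sum_{k=0}^{n-1}1/T_k = \frac{C_*}{1-\alpha_*}\sum_{k=0}^{n-1}1/T_k$, which is precisely the last term of the claimed bound. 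On the remaining iterations I invoke the standing assumption $T_k\ge T_0$ to replace $1/T_k$ by $1/T_0$ and then regroup by expert, obtaining $\sum_{e\ne e^*}\frac{C_e}{T_0(1-\alpha_e)}T_e(n)$. Taking expectations (the $e^*$-term is deterministic, so only the $T_e(n)$ factors carry an expectation) assembles exactly the stated inequality.

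The step I expect to require the most care is justifying the conditioning in the second paragraph. Corollary~\ref{corr:mc_bnd} is stated conditional on $(s_{t_k}, e_{t_k})$, yet $t_k$, $T_k$, and $T_e(n)$ are all random (the iteration start times are stopping times, as noted after Algorithm~\ref{alg:MABMDP}). The resolution is the Markov property: conditioned on the state $s_{t_k}$ at the start of an iteration and on the chosen expert, the trajectory over the next $T_k$ steps is governed solely by the induced kernel $\tilde P_{e_{t_k}}$, independently of how iteration $k$ was reached, so the corollary applies termwise and the subsequent regrouping via $T_e(n)$ is legitimate. The only other thing to watch is that both $\frac{1-\alpha_e^{T_k}}{1-\alpha_e}\le\frac{1}{1-\alpha_e}$ and $1/T_k\le 1/T_0$ are one-sided, so the derivation is genuinely an inequality rather than an identity, despite the proposition's name.
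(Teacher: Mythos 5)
Your proposal is correct and follows essentially the same route as the paper's proof: both split each iteration's shortfall into the steady-state gap $\Delta_e$ plus a transient term bounded via Corollary~\ref{corr:mc_bnd}, handle the suboptimal experts with $T_k \ge T_0$ and the best expert by bounding the indicator to get $K_*\sum_{k=0}^{n-1}1/T_k$, and regroup by expert using $T_e(n)$. The only difference is cosmetic --- the paper performs the exact algebraic decomposition (adding and subtracting $\sum_e T_e(n)\bar R^e$) before bounding, while you bound per-iteration and then regroup --- and your closing remarks on the stopping-time/Markov-property justification and on the one-sided nature of the bound match the paper's treatment.
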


\begin{proof}

Note that $\sum_{e \in \Ec} T_e(n) = n$ and $\sum_{m = 0}^{n-1} \I\{e_{mT} = e\} = T_e(n)$, both almost surely. 
{\allowdisplaybreaks
\begin{align*}
r(n)&=n\bar R^* -\sum_{m = 0}^{n-1}\sum_{t = t_m}^{t_{m+1}-1} \frac{r_t}{T_m}\\
&=n \bar R^* - \sum_{e \in \Ec} T_e(n) \bar R^e + \sum_{e \in \Ec} T_e(n) \bar R^e -\sum_{m = 0}^{n-1}\sum_{t = t_m}^{t_{m+1}-1} \frac{r_t}{T_m}\\
&=\sum_{e \in \Ec} T_e(n) [\bar R^* - \bar R^e] + \sum_{e \in \Ec} T_e(n) \bar R^e - \sum_{m = 0}^{n-1} \sum_{e \in \Ec} \I\{e_{t_m} = e\} \sum_{t = t_m}^{t_{m+1}-1} \frac{r_t}{T_m}\\
& =\sum_{e \in \Ec} T_e(n) [\bar R^* - \bar R^e] + \sum_{m = 0}^{n-1} \sum_{e \in \Ec} \I\{e_{t_m} = e\} \left[ \bar R^{e_{t_m}} -\sum_{t = t_m}^{t_{m+1}-1} \frac{r_t}{T_m} \right] \\
&=\sum_{e \in \Ec} T_e(n) \Delta_e +\sum_{m = 0}^{n-1} \sum_{e \in \Ec} \I\{e_{t_m} = e\} \left[ \bar R^{e_{t_m}} -\sum_{t = t_m}^{t_{m+1}-1} \frac{r_t}{T_m} \right]
\end{align*}
}

Thus we get that a first decomposition of the regret is given by:
\begin{align}
r(n)&=\sum_{e \in \Ec} T_e(n) \Delta_e +...\label{eq:Regret1}\\
&\quad+\sum_{m = 0}^{n-1} \sum_{\substack{e \in \Ec \\ e\ne e^*}} \I\{e_{t_m} = e\} \left[ \bar R^{e_{t_m}} -\sum_{t = t_m}^{t_{m+1}-1} \frac{r_t}{T_m} \right]+... \label{eq:Regret2}\\
&\quad+ \sum_{k = 0}^{n-1}\I\{e_{t_k} = e^*\}\left[ \bar R^{*} -\sum_{t = t_m}^{t_{m+1}-1} \frac{r_t}{T_m} \right]
\label{eq:Regret3}
\end{align}

We note that the first term describes the steady state regret of all the suboptimal agents. The second term describes the transients coming from the fact that each suboptimal expert is not truly operating in their steady state. The third term describes the transient regret coming from choosing the optimal expert.\\

We first find an upper bound for the expectation of the second term, (\ref{eq:Regret2}). By utilizing the tower property of the conditional expectation, noting the appropriate measurability, and invoking Corollary~\ref{corr:mc_bnd}, we get that:
\[
\E \left[ \sum_{m = 0}^{n-1} \sum_{\substack{e \in \Ec \\ e\ne e^*}} \I\{e_{t_m} = e\} \left[ R^{e_{t_m}} -\sum_{t = t_m}^{t_{m+1}-1} \frac{r_t}{T_m} \right] \right] 
\]
\begin{align*}
 &= \E \left[\sum_{m = 0}^{n-1} \sum_{\substack{e \in \Ec \\ e\ne e^*}}\I\{e_{t_m} = e\} \E \left[  \left( \bar R^{e_{t_m}} - \sum_{t = t_m}^{t_{m+1}-1} \frac{r_t}{T_m} \right) \middle| s_{t_m},e_{t_m} \right] \right]\\
 & \leq \E \left[\sum_{m = 0}^{n-1} \sum_{\substack{e \in \Ec \\ e\ne e^*}} \I\{e_{t_k} = e\} \E \left[  \left( \frac{C_{e_{t_m}}(1-\alpha^{T_m}_{e_{t_m}})}{T_m (1-\alpha_{e_{t_m}})}\right) \middle| s_{t_m},e_{t_m} \right] \right]\\
 &\leq \E \left[\sum_{m = 0}^{n-1} \sum_{\substack{e \in \Ec \\ e\ne e^*}}\I\{e_{t_m} = e\} \left( \frac{C_{e_{t_m}}}{T_0 (1-\alpha_{e_{t_m}})}\right) \right] \\
 &= \E \left[\sum_{m = 0}^{n-1} \sum_{\substack{e \in \Ec \\ e\ne e^*}}\I\{e_{t_m} = e\} \left( \frac{C_{e}}{T_0 (1-\alpha_{e})}\right) \right] \\
 &=\sum_{\substack{e \in \Ec \\ e\ne e^*}}\E \left[T_e(n)\right] \left( \frac{C_{e}}{T_0 (1-\alpha_{e})}\right)  \\
\end{align*}

The second line comes from invoking Corollary~\ref{corr:mc_bnd} and the third from the fact that $T_n\ge T_0$ and $\alpha_e<1$ for all $e\in \Ec$.\\

We now find an upper bound on the expectation of the third term in (\ref{eq:Regret1}).

\begin{align*}
\E \left[ \sum_{m = 0}^{n-1}\I\{e_{t_m} = e^*\}\left[ \bar R^* -\sum_{t = t_m}^{t_{m+1}-1} \frac{r_t}{T_m} \right] \right] &=
\E \left[ \sum_{m = 0}^{n-1}\I\{e_{t_m} = e^*\} \E \left[ \bar R^* -\sum_{t = t_m}^{t_{m+1}-1} \frac{r_t}{T_m} \middle| s_{t_m},e^* \right] \right]\\
&\le \E \left[ \sum_{m = 0}^{n-1}\I\{e_{t_m} = e^*\} \E \left[ \frac{C_*}{T_m (1-\alpha_*)}  \middle| s_{t_m},e^* \right] \right]\\
&= \E \left[ \sum_{m = 0}^{n-1}\I\{e_{t_m} = e^*\} \frac{C_*}{T_m (1-\alpha_*)} \right] \\
&\le \sum_{m = 0}^{n-1}\frac{C_*}{T_m (1-\alpha_*)}\\
&= \frac{C_*}{1-\alpha_*}\sum_{m = 0}^{n-1} \frac{1}{T_m}\\
\end{align*}

Now putting all of this together, and setting $K_e=\frac{C_e}{1 -\alpha_e}$ for each expert $e$ in $\Ec$ we get that the expected regret, $\E[r(n)]$, satisfies:

\[
\E[r(n)]\le \sum_{\substack{e \in \Ec \\ e\ne e^*}} \E[ T_e(n) ] \left[ \Delta_e + \frac{K_e}{T_0} \right] + K_*\sum_{m = 0}^{n-1} \frac{1}{T_m}
\]
\end{proof}

We note that this regret decomposition is similar to the case for the classical MAB formulation. Indeed, we note that in the limit where $T_0\rightarrow \infty$, this decomposition is exactly the same as the classical case.  The term $\Delta_e + \frac{K_e}{T_0}$ is the regret coming from choosing a sub-optimal expert. The term $K_*\sum_{m = 0}^{n-1} \frac{1}{T_m}$ comes from the fact that, because of the dynamics of the MDP, choosing the optimal expert will not immediately give the steady-state reward. We note that the sequence $\{T_n,n\ge0\}$ can be chosen strategically to control how much this term contributes to the overall regret.

Thus, we have shown how the problem of choosing the best expert out of a set in MDPs can be recast as a multi-armed bandit problem. We have also outlined a general framework for using MAB algorithms to select control policies for dynamical systems, despite the coupling between sequential actions and issues with causality. What is important is that the algorithms are robust in the following sense: small perturbations to the distribution for each sample will not affect the convergence properties of the algorithm, even when these perturbations do not necessarily follow any sort of structure. We show how the classical UCB algorithm can be extended to the experts in MDPs problem in the sequel.

% !TEX root = 2017-arXiV.tex

\subsection{Upper Confidence Bound Algorithm}

As an example of the regret analysis we mentioned above, we consider the Upper Confidence Bound (UCB) algorithm adapted to MDPs. Like the classical UCB algorithm introduced by~\citep{lai1985}, the crux of the algorithm lies in the construction of the confidence bounds, $c^e_{k,n}$. For each $e \in \Ec$ and $k, n$, these confidence bounds are independent of the realized values of the random variable, and can be computed a priori.  We first discuss the construction of the confidence bounds, and then develop bounds on the expected regret of the algorithm.

\begin{algorithm}[]
 % \algsetup{linenosize=\tiny}
  \caption{Upper Confidence Bounds for MDPs}
  \begin{algorithmic}[1]
%  \algsetup{linenosize=\small}
  	\State \textbf{Input:} A set of experts $\Ec$, a sequence of times $\{T_n \in \N, n\ge0\}$, confidence bounds $c^e_{k,n}$ 
  	\State $t_0 \gets 0$
  	\State $k_e \gets 0$ for $e \in \Ec$ \{the number of time expert $e$ has been chosen\}
% 	\State $c^e_{0,0} \gets \infty \ \ \forall \ i \in \{1,...N\}$
% 	\State $R_e \gets 0 \ \ \forall \ e \in \Ec$ \{the cumulative average reward of expert $i$\}
 	\State $S_e \gets 0 \ \ \forall \ e \in \Ec$ \{the cumulative per-MDP-time-step reward of expert $i$\}
 	\For{$n=0,1,2,...$}
		\State $e_{t_n} \gets \arg\max_{e \in \Ec} \{S_e/k_e+c^e_{k_e,n}\}$ \{ties are broken arbitrarily\} \label{alg:argmaxline}
		\State $r\gets0$
		\For{$t=t_n$ to $t_{n}+T_m-1$}
			\State Receive observation $y_t$
			\State Receive choice of action $a_t$ from expert $e_{t_n}$
			\State Apply $a_t$ to environment, receive reward $r_t$
			\State $r\gets r+r_t$
		\EndFor
		\State \textbf{end for}
		\State $S_{e_{t_n}} \gets S_{e_{t_n}}+ r/T_n$
		\State $k_{e_{t_n}} \gets k_{e_{t_n}}+1$
		\State $t_{n+1}=t_n+T_n$
 	\EndFor
	\end{algorithmic}
	\label{alg:ucb}
\end{algorithm}

\subsubsection{Construction of confidence bounds}

In the classical UCB algorithm, each `pull' of an expert gives an independent, and identically distributed sample from the expert's distribution. Thus the Azuma-Hoeffding inequality can be straight-forwardly applied to the empirical mean of the samples to construct an upper confidence bound. In the MDP setting, the cumulative reward is not independent of the past history of rewards because of the underlying dependency on the state of the MDP when the expert is chosen. To construct the confidence bounds, $c^e_{k,n}$ we first build a martingale out of the average cumulative rewards received for the expert from each time it is chosen, $R^e_k$. We then construct the confidence bound on the distance between the value of the martingale and the expected average cumulative reward of the expert under their steady-state distribution, $\bar R^e$.

We first construct a martingale from the sequence of rewards $(R_k^e)_k$. We begin by defining our filtration:\\

 Let $\Fc_k^e = \sigma(s_{t_1^e},\dots,s_{t_{k+1}^e},R_1^e,\dots,R_k^e)$, the $\sigma$-algebra generated by the sequence of initial states, including the next initialization, and average cumulative rewards.\footnote{We note that a quirk of the analysis is that the filtration at time $t_k^e$ requires the initial state for the expert $e$ at time $t_{k+1}^e$. Intuitively, this means that the filtration of our martingale construction has enough information to essentially decouple and ignore the actions of all the experts chosen in the intermediate time steps when we take the conditional expectation.} Define $Z_k^e = R_k^e - \E[R_k^e \vert \Fc_{k-1}^e]$. Finally, let $S_k^e = \sum_{\ell = 1}^k Z_\ell^e$. We claim that $S_k^e$ is a martingale. We can clearly see this, since $\E[S_{k+1}^e \vert \Fc_k^e] = \E[Z_{k+1}^e \vert \Fc_k^e] + \E[S_k^e \vert \Fc_k^e] = S_k^e$, and $\E[|S_k^e|] < \infty$ since our rewards are bounded. \\

Now, we have defined a martingale for each expert $e \in \Ec$. Furthermore, we note that each martingale has bounded differences, by the boundedness of our reward function: $|S_k^e - S_{k-1}^e| = |Z_k^e| \leq 1$ almost surely. With this construction we invoke the Azuma-Hoeffding inequality.

\begin{proposition}[Azuma-Hoeffding inequality~\citep{Hoeffding1963,Azuma1967}]
\label{prop:azuma}
For any expert $e \in \Ec$ and $t > 0$, we have:
\begin{equation}
\Pr\left( S_k^e \leq -t \right) \leq \exp \left( \frac{-t^2}{2k} \right)
\end{equation}
\end{proposition}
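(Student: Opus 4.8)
This is the standard Azuma--Hoeffding inequality specialized to the martingale $S_k^e$ constructed above, so the plan is to run the classical exponential-moment (Chernoff) argument. The key structural facts already established in the excerpt are exactly what the method needs: $S_k^e = \sum_{\ell=1}^k Z_\ell^e$ with $S_0^e = 0$, each $Z_\ell^e = R_\ell^e - \E[R_\ell^e \mid \Fc_{\ell-1}^e]$ is a martingale difference satisfying $\E[Z_\ell^e \mid \Fc_{\ell-1}^e] = 0$, and $|Z_\ell^e| \leq 1$ almost surely. First I would fix any $\lambda > 0$ and apply Markov's inequality to the nonnegative random variable $e^{-\lambda S_k^e}$, writing $\Pr(S_k^e \leq -t) = \Pr(e^{-\lambda S_k^e} \geq e^{\lambda t}) \leq e^{-\lambda t}\, \E[e^{-\lambda S_k^e}]$. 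The whole problem then reduces to controlling the moment generating function $\E[e^{-\lambda S_k^e}]$.

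The core step is a conditional bound on the contribution of each martingale difference, i.e. Hoeffding's lemma applied to $Z_\ell^e$ conditioned on $\Fc_{\ell-1}^e$. Since $Z_\ell^e$ has conditional mean zero and takes values in $[-1,1]$, convexity of $x \mapsto e^{-\lambda x}$ gives the conditional estimate $\E[e^{-\lambda Z_\ell^e} \mid \Fc_{\ell-1}^e] \leq e^{\lambda^2/2}$, where the constant $1/2$ comes from $(b-a)^2/8$ with $b - a = 2$. I would then peel off the terms one at a time using the tower property and the $\Fc_{k-1}^e$-measurability of $S_{k-1}^e$:
\[
\E[e^{-\lambda S_k^e}] = \E\!\left[ e^{-\lambda S_{k-1}^e}\, \E[e^{-\lambda Z_k^e} \mid \Fc_{k-1}^e] \right] \leq e^{\lambda^2/2}\, \E[e^{-\lambda S_{k-1}^e}].
\]
Iterating this recursion down to $S_0^e = 0$ yields $\E[e^{-\lambda S_k^e}] \leq e^{k\lambda^2/2}$.

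Combining the two pieces gives $\Pr(S_k^e \leq -t) \leq \exp(-\lambda t + k\lambda^2/2)$ for every $\lambda > 0$. The final step is to optimize the free parameter: minimizing the quadratic $-\lambda t + k\lambda^2/2$ over $\lambda$ gives $\lambda = t/k$ and minimum value $-t^2/(2k)$, which delivers the claimed bound $\Pr(S_k^e \leq -t) \leq \exp(-t^2/(2k))$. I expect the main obstacle to be the conditional Hoeffding lemma in the second paragraph --- it is the only genuinely nontrivial estimate, since the Markov step and the tower-property recursion are routine; everything else is bookkeeping once the bounded-difference bound $|Z_\ell^e| \leq 1$ and the martingale-difference property are in hand, both of which are furnished by the construction preceding the statement.
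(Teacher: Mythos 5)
Your proof is correct: the Chernoff step via Markov's inequality on $e^{-\lambda S_k^e}$, the conditional Hoeffding lemma with $b-a=2$ giving $\E[e^{-\lambda Z_\ell^e} \mid \Fc_{\ell-1}^e] \leq e^{\lambda^2/2}$, the tower-property peeling down to $S_0^e = 0$, and the optimization $\lambda = t/k$ yielding $\exp(-t^2/(2k))$ are all in order, and the hypotheses you use (conditional mean zero, $|Z_\ell^e| \leq 1$ a.s.) are exactly those furnished by the paper's martingale construction. The paper itself offers no proof---it cites this as the classical Azuma--Hoeffding inequality from \citet{Hoeffding1963} and \citet{Azuma1967}---and your argument is precisely the standard exponential-moment proof found in those references, so there is nothing to flag.
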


Given we have a high-probability bound on our martingale, we now bound the distance of the empirical mean of the average cumulative reward of the expert $e$, from its true value. Since $e$ is fixed, we will drop the dependence for cleanliness.

The event that we would like to have a high probability bound for is:

\[\left\{ \bar R - \frac{1}{k} \sum_{\ell = 1}^k R_\ell \geq t \right\} \]

Note that $\E[R_k^e \vert \Fc_{k-1}^e] = \E[R_k^e \vert s_{t_k^e}]$, by the Markov properties in the appropriate locations. We therefore derive a condition on our constructed martingale such that this event must hold:

\begin{align*}
\left\{ \bar R - \frac{1}{k} \sum_{\ell = 1}^k R_\ell \geq t \right\} &=
\left\{ \bar R - \frac{1}{k} \sum_{\ell = 1}^k \E[R_k \vert s_{t_k}] + \frac{1}{k} \sum_{\ell = 1}^k \E[R_k \vert s_{t_k}]  -  \frac{1}{k} \sum_{\ell = 1}^kR_\ell \geq t \right\} \\
&=\left\{ \bar R - \frac{1}{k} \sum_{\ell = 1}^k \E[R_k \vert s_{t_k}] - \frac{S_k}{k} \geq t \right\}
\end{align*}

This last event is a subset of an event whose probability we can bound:

\[
\left\{ \bar R - \frac{1}{k} \sum_{\ell = 1}^k \E[R_k \vert s_{t_k}] - \frac{S_k}{k} \geq t \right\} \subseteq
\left\{ \frac{1}{k} \sum_{\ell = 1}^k \frac{K}{T_0}  - \frac{S_k}{k} \geq t \right\} 
\]

\begin{align*}
\left\{ \frac{1}{k} \sum_{\ell = 1}^k \frac{K}{T_0}  - \frac{S_k}{k} \geq t \right\} &=
\left\{\frac{K}{T_0} - \frac{S_k}{k} \geq t \right\}\\
& =
\left\{\frac{S_k}{k} \leq \frac{K}{T_0} - t \right\}
\end{align*}
Thus:
\[
\Pr \left( \bar R - \frac{1}{k} \sum_{\ell = 1}^k R_\ell^k \geq t \right) \leq 
\Pr \left( \frac{S_k}{k} \leq \frac{K}{T_0} - t  \right)
\]
Now, let $t'= tk - k\frac{K}{T_0}$. By invoking Proposition~\ref{prop:azuma}, we get:

\[
\Pr \left( \bar R - \frac{1}{k} \sum_{\ell = 1}^k R_\ell \geq t \right) \leq \exp \left( \frac{-(t')^2}{2k} \right) = \exp \left( \frac{-k(t - \frac{K}{T_0})^2}{2} \right)
\]
Now for a fixed probability, $\delta>0$, we solve for $t$:
% \[
% \exp \left( \frac{-k(t - \frac{K}{T_0})^2}{2} \right) = \delta
% \]

% \[
% \log(\delta) = \frac{-k(t - \frac{K}{T_0})^2}{2}
% \]
% \[
% \sqrt{ -\frac{2}{k}\log(\delta)} = {t - \frac{K}{T_0}}
% \]
\[
t=\frac{K}{T_0} + \sqrt{ \frac{2}{k}\log\left(\frac{1}{\delta}\right)} 
\]
Thus, we have constructed a high probability bound on the event of interest:

\[
\Pr \left( \bar R^e \geq \frac{1}{k} \sum_{\ell = 1}^k R_\ell^e + \frac{K_e}{T_0} + \sqrt{ \frac{2}{k}\log\left(\frac{1}{\delta}\right)} \right) \leq \delta
\]

We first note that, in practice, we choose $\delta$ to vary with the number of iterations in our algorithm. In particular, we choose $\delta=n^{-4}$. Thus our confidence bound for each expert depends on both the number of times the expert was chosen, $k$, but also the number of iterations of the algorithm, $n$. Thus our final confidence bound is:

\[
c^e_{k,n}= \frac{K_e}{T_0}+\sqrt{\frac{8\log(n)}{k}}
\]

\subsubsection{A regret bound for the UCB algorithm in MDPs}

Having built the confidence bounds, $c^e_{k,n}$ for the UCB algorithm, we now use them to find an upper bound on the expected regret. Recall that in Definition \ref{def:regret} we have defined regret in this setting as: 

\[r(n)=n\bar R^* -  \sum_{k = 0}^{n}\sum_{t=t_k}^{t_{k+1}} \frac{r_t}{T_k}\]

Given Proposition \ref{prop:rdi}, all that is left to do is construct an upper bound on $\E[T_e(n)]$.\\

\begin{proposition}[Regret bound for the UCB algorithm in MDPs]
\label{prop:ucbregret}
For a set of experts $\Ec$ and initial time horizon $T_0$ such that $\bar{R^*}-\bar{R}^e > \frac{2K_e}{T_0}$ for all $e \in \Ec$, and for a choice of $\delta(n)=n^{-4}$, the expected regret of the UCB algorithm after $n$ time steps of the algorithm, $\E[r(n)]$, satisfies:

% \[
% \E[r(n)]\le \sum_{\substack{e \in \Ec\\ e\ne e^*}}{\bigg[ \bigg( \frac{8 \log{n}}{(\Delta_e -\frac{2K_e}{T_0})^2} +1+\frac{\pi^2}{3} \bigg) \bigg( \Delta_e+\frac{K_e}{T_0}}\bigg) \bigg]+K_*\log{\bigg(\frac{n+T_0}{1+T_0}\bigg)}
% \]

\[
\E[r(n)]\le \sum_{\substack{e \in \Ec\\ e\ne e^*}}{\bigg[ \bigg( \frac{32 \log{n}}{(\Delta_e -\frac{2K_e}{T_n})^2} +1+\frac{\pi^2}{3} \bigg) \bigg( \Delta_e+\frac{K_e}{T_0}}\bigg) \bigg]+K_*\sum_{k = 0}^{n-1} \frac{1}{T_k}
\]

\end{proposition}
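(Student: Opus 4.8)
The plan is to feed everything through the regret decomposition of Proposition~\ref{prop:rdi}, which already isolates the only quantity that depends on the algorithm, namely $\E[T_e(n)]$ for each suboptimal expert $e\ne e^*$. Concretely, once I establish
\[
\E[T_e(n)] \le \frac{32\log n}{\left(\Delta_e - \tfrac{2K_e}{T_0}\right)^2} + 1 + \frac{\pi^2}{3},
\]
I substitute it into the first sum of Proposition~\ref{prop:rdi} (whose per-expert weight is exactly $\Delta_e + K_e/T_0$) and carry the term $K_*\sum_k 1/T_k$ over verbatim, which reproduces the claimed inequality. So the whole argument reduces to the UCB counting bound for $\E[T_e(n)]$, run as an adaptation of the classical Auer--Cesa-Bianchi--Fischer analysis to the martingale confidence width $c^e_{k,n}=K_e/T_0 + \sqrt{8\log n/k}$ constructed above.

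First I would upgrade the one-sided guarantee to a \emph{two-sided} one. The construction preceding the statement already yields $\Pr(\bar R^{e} \ge \hat R_{e,k} + c^e_{k,n}) \le \delta$ with $\delta=n^{-4}$, where $\hat R_{e,k}=\frac1k\sum_{\ell=1}^k R_\ell^e$. Running the identical derivation against the opposite tail of Proposition~\ref{prop:azuma} (that is, $\Pr(S_k^e \ge t)\le \exp(-t^2/2k)$) and using the bound $|\bar R^{e} - \E[R_\ell^e\mid s_{t_\ell^e}]|\le K_e/T_0$ from Corollary~\ref{corr:mc_bnd} on the other side gives the symmetric statement $\Pr(\hat R_{e,k}\ge \bar R^{e} + c^e_{k,n}) \le n^{-4}$. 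Together these say that, with high probability, the UCB index upper-bounds the steady-state reward of every expert while its empirical mean does not overshoot its steady-state reward by more than one confidence width.

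Next is the core event decomposition. By the selection rule on line~\ref{alg:argmaxline} of Algorithm~\ref{alg:ucb}, if expert $e$ is chosen at iteration $m$ with realized counts $s=T_e(m)$ and $s^*=T_{e^*}(m)$, then $\hat R_{e,s} + c^e_{s,m} \ge \hat R_{e^*,s^*} + c^{e^*}_{s^*,m}$. I would then argue that this forces at least one of: \textbf{(a)} $\hat R_{e^*,s^*} + c^{e^*}_{s^*,m} \le \bar R^*$; \textbf{(b)} $\hat R_{e,s} \ge \bar R^{e} + c^e_{s,m}$; \textbf{(c)} $\bar R^* < \bar R^{e} + 2c^e_{s,m}$. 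Indeed, if all three failed one could chain $\hat R_{e^*,s^*}+c^{e^*}_{s^*,m} > \bar R^* \ge \bar R^{e} + 2c^e_{s,m} > \hat R_{e,s}+c^e_{s,m}$, contradicting the choice of $e$. Event (c) unwinds to $\Delta_e < 2K_e/T_0 + 2\sqrt{8\log m/s}$, so taking the threshold $\ell = \lceil 32\log n/(\Delta_e - 2K_e/T_0)^2\rceil$ renders (c) impossible once $s\ge\ell$ and $m\le n$. This is exactly the step that manufactures the leading $32\log n/(\Delta_e-2K_e/T_0)^2$ term (matching the displayed bound) and where the hypothesis $\bar R^* - \bar R^{e} > 2K_e/T_0$ is essential, since it is what makes the denominator positive.

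Finally I convert the pointwise event into a count via $T_e(n) \le \ell + \sum_{m=1}^{n}\I\{e_{t_m}=e,\,T_e(m)\ge \ell\}$, bounding each surviving summand (where only (a) or (b) can hold) and relaxing the random counts by a union over all admissible pairs $(s^*,s)$; the two-sided guarantee then gives expectation at most $\Pr(\text{a})+\Pr(\text{b})\le 2m^{-4}$ per pair, so $\E[T_e(n)] \le \ell + \sum_{m=1}^{n}\sum_{s^*=1}^{m}\sum_{s=\ell}^{m} 2m^{-4} \le \ell + 2\sum_{m=1}^{\infty}m^{-2} = \ell + \frac{\pi^2}{3}$, yielding the desired bound on $\E[T_e(n)]$. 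The main obstacle is not any single inequality but this last passage: Proposition~\ref{prop:azuma} controls a martingale with a \emph{fixed} number of increments, whereas $T_e(m)$ and $T_{e^*}(m)$ are random, so the tail bound cannot be applied at the realized index. The union over all values of $(s^*,s)$ is what repairs this, at the cost of the extra summation responsible for the $\pi^2/3$ constant; and it is precisely because the filtration $\Fc_k^e$ is indexed by the number of times $e$ is chosen (rather than by MDP time) that $\hat R_{e,s}$ and $S_s^e$ remain well-defined no matter which experts were interposed between consecutive pulls of $e$. I would also verify the two-time-scale bookkeeping so that the width appearing inside the $\arg\max$ at iteration $m$ is the same $c^e_{\cdot,m}$ for which the tail bounds with $\delta=m^{-4}$ were derived.
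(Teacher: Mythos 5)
Your proposal is correct and follows essentially the same route as the paper: reduce via the regret decomposition of Proposition~\ref{prop:rdi} to bounding $\E[T_e(n)]$ for each suboptimal $e$, then run the Auer et al.\ counting argument with the shifted width $c^e_{k,m}=K_e/T_0+\sqrt{8\log m/k}$ --- the three-event decomposition, the threshold $w=\lceil 32\log n/(\Delta_e-2K_e/T_0)^2\rceil$ killing event (c), and the union bound over realized counts with $\delta(m)=m^{-4}$ producing the $1+\pi^2/3$ constant --- and you are in fact more careful than the paper in one place, making the two-sided Azuma/Corollary~\ref{corr:mc_bnd} bound explicit where the paper disposes of it with a bare ``similarly.'' One discrepancy to flag: your denominator is $\Delta_e-2K_e/T_0$ while the stated bound displays $\Delta_e-2K_e/T_n$; since $c^e_{k,n}$ is defined with $T_0$, your version is the one the derivation actually supports, and the paper's own proof commits the same $T_0$-versus-$T_n$ slip when it unwinds $\Delta_e>2c^e_{w,n}$, so this is a typo in the paper rather than a gap in your argument.
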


We note that the term $K_*\sum_{k = 0}^{n-1} \frac{1}{T_k}$, is left as a design choice for now, since the sequence $(T_n)_n$ can be chosen. We would intuitively like $\{T_n, n\ge0\}$ to grow as slowly as possible to speed up our algorithm, but we also need the sum to grow as slowly as possible to control the rate of growth of the regret. We will ultimately make a choice of $\{T_n, n\ge0\}$ such that the sum does not negatively impact our asymptotic rate. 

\begin{proof}
We show that for $e \in \Ec, e\ne e^*$: 

\[
\E[T_e(n)]\le \frac{32\log{n}}{\left(\Delta_e -\frac{2K_e}{T_n}\right)^2} +1+\frac{\pi^2}{3} 
\]

To construct this upper bound, we follow the template of the proof of Theorem 1 from \cite{Auer2002}. They show that, for a chosen positive integer $w$:

\[
T_e(n)\le w+\sum_{m=1}^\infty \sum_{s=1}^{m-1}\sum_{k=w}^{m-1}\I\{\bar{R}^*+c^*_{s,m}\le \bar{R}_e+c^e_{k,m}\}
\]

The event $\{\bar{R}^*+c^*_{s,m}\le \bar{R}_e+c^e_{k,m}\}$ implies that at least one of the following must hold:
\[
R^*_s\le \bar{R}^*-c^*_{s,m}
\]
\[
R^e_k\ge \bar{R}^e+c^e_{k,m}
\]
\[
\bar{R}^*<\bar{R}^e+2c^e_{k,m}
\]

We note that by construction, $Pr(R^*_s\le \bar{R}^*-c^*_{s,m})\le m^{-4}$. Similarly, $Pr(R^e_k\ge \bar{R}^e+c^e_{k,m})\le m^{-4}$. We further choose $w$ such that the third inequality is always false:

\begin{align*}
\Delta_e & > 2c^e_{w,n}\\
\Delta_e - \frac{2K_e}{T_n} &> 2\sqrt{\frac{8\log{n}}{w}}\\
w&>\frac{32\log{n}}{\left(\Delta_e - \frac{2K_e}{T_n}\right)^2}\\
\end{align*}

Thus we get:
\begin{align*}
\E[T_e(n)] &\le \left\lceil\frac{32\log{n}}{\left(\Delta_e - \frac{2K_e}{T_n}\right)^2}\right\rceil+\sum_{m=1}^\infty \sum_{s=1}^{m-1}\sum_{k=w}^{m-1} 2m^{-4}\\
&\le  \left\lceil\frac{32\log{n}}{\left(\Delta_e - \frac{2K_e}{T_n}\right)^2}\right\rceil+ \sum_{m=1}^\infty \sum_{s=1}^{m-1}\sum_{k=1}^{m-1} 2m^{-4}\\
\end{align*}

Which we can reduce to:

\[
\E[T_e(n)]\le \frac{32\log{n}}{\left(\Delta_e - \frac{2K_e}{T_n}\right)^2}+1+\frac{\pi^2}{3}
\]

Using this upper bound gives us our desired result.
\end{proof}

We note that this analysis gives us a $O(\log{n})$ rate of convergence without looking at the term that depends on the sequence $\{T_n,n\ge0\}$. Since picking a sequence $\{T_n,n\ge0\}$ that grows too quickly will slow down our algorithm in practice, we can choose $T_n$ to grow linearly, i.e. $T_n=T_0+cn$ for some constant $c > 0$, and still preserve the logarithmic rate. Indeed, with this choice:

\[
\sum_{k = 0}^{n-1} \frac{1}{T_k}\le \frac{1}{c}\log{\left( \frac{T_0+cn-c}{T_0}\right)}
\]

Thus the expected regret under the UCB algorithm applied to MDPs is:

\[
\E[r(n)]\le \sum_{\substack{e \in \Ec\\ e\ne e^*}}  \left[ \left( \frac{32 \log{n}}{\left(\Delta_e - \frac{2K_e}{T_n}\right)^2} +1+\frac{\pi^2}{3} \right) \left( \Delta_e+\frac{K_e}{T_0} \right) \right]+\frac{K_*}{c}\log{\left( \frac{T_0+cn-c}{T_0}\right)}
\]

Our final result suggests that the expected regret grows at a rate of $O(\log{n})$ as $n \rightarrow \infty$. Further, we remark on the dependence on the initial time horizon $T_0$. As $T_0 \rightarrow \infty$ the regret reduces to the rate of the UCB algorithm under the analysis in \cite{Auer2002}, with the slight caveat that the factor of $32$ in our analysis is a factor of $8$ in theirs. This is due to the fact that we use a different concentration inequality since we are dealing with martingales. This matches our intuition that, as $T_0 \rightarrow \infty$, the individual choice of experts become less and less inter-dependent and begin to converge to the expected reward under the stationary distribution. Thus, in the limit, this problem reduces to the classical multi-armed bandit case.

It is worth noting that all the classical multi-armed bandit results will talk about rates in terms of $n$, i.e. the iteration number of the multi-armed bandit algorithm; for our application, what we often will care about in practice is the convergence rate in terms of $t$, i.e. the time step of the underlying system. Whereas our regret as a function of $n$ will only improve as $T_0 \rightarrow \infty$, we will find that the actual time it takes to converge in $t$ may group. This is an interesting trade-off that we are currently investigating.

\subsection{Comments}

In this paper, we have introduced algorithms which utilize a multi-armed bandit approach to select a policy from a finite set of candidate policies. Classical multi-armed bandit approaches usually assume that each time a bandit $e \in \Ec$ is used, the controller receives one independent and identically distributed sample from some fixed distribution. When the underlying system is an MDP, we do not have either of these conditions: the rewards received will depend on the state of the system determined by previous experts' actions, and the initialization will vary at each sample.

We have shown that by choosing a fixed expert for a long enough time horizon and averaging the rewards experienced and under some regularity conditions, the MDPs will sufficiently mix under a fixed policy. This causes each sample to be, in some sense, representative of the rewards of listening to expert $e$ in the long term.

\section{Experimental Results}
\label{sec:result}

% !TEX root = 2017-ArXiV.tex

We now seek to validate the theoretical results presented in the prequel on an example. We first outline the environment, the experts in the given set, and the choice of variables used in the experiments. We then outline preliminary results validating the logarithmic growth of the regret, as well as results showing the impact on the initial time horizon. Lastly, we show qualitative results on the ability of the UCB algorithm to respond to changes in the underlying MDP.

\subsection{Experimental Setup}

We test the UCB algorithm outlined in Section \ref{sec:problem} on a grid-world example. The goal of the ``agent" in the grid-world MDP is simply to maximize its cumulative reward. We first describe the dynamics, and then the reward structure of the grid.\\

 The grid and setup used is shown and described in Figure \ref{fig:setup}A. Each state, in this MDP, is a tile of the grid, and the agent begins in the white tile. The action space is $\Ac=\{1,2,3,4\}$. Each action corresponds to a move in one of the four cardinal directions. The dark blue states are almost-trapping states, where with probability 0.98 the agent is stuck in that state, and with probability 0.02, the agent moves in the direction of their chosen action. For all other states, the agent follows its desired action with probability 0.97, and goes in a random other direction with probability 0.03. In states along the edge of the grid, an action going out of the grid will result in a movement in a random direction back into the grid. 

\begin{figure}[!h]
 \centering
 \includegraphics[width=\textwidth]{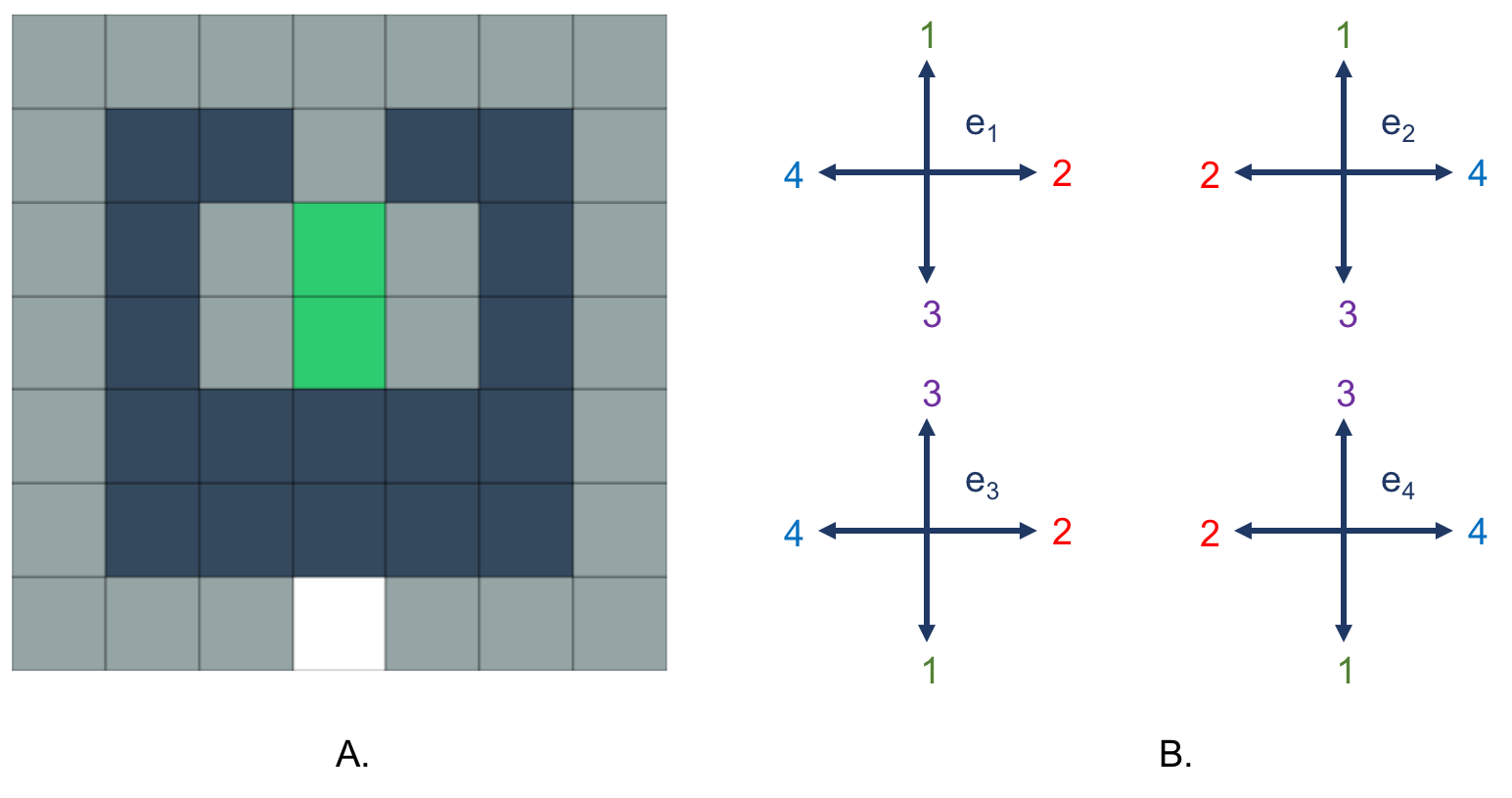}
 \caption{A. The setup of the Gridworld used for the experiments: Blue states are almost-trapping states, where with probability 0.98 you stay in the state, and  with probability 0.02 you move in the desired direction. In all other states, you move in the desired direction with probability 0.97 and a random direction with probability 0.03. The probability of pursuing an action going out of the grid is 0, and results in a movement in a random direction back into the grid. Green states give a reward of 1, dark blue states a reward of 0, and all other states a reward of 0.1. The initial state is the white tile. B. The dynamics under which each expert $e_1,e_2,e_3,e_4$ were trained.}
\label{fig:setup}
\end{figure}

The rewards in the grid are as follows: the green squares both give a reward of 1, while the gray squares give rewards of 0.1. Dark blue squares give a reward of 0. 

Given this setup, the optimal policy would have the agent move around the edge of the grid, avoiding the dark-blue states, and then go back and forth between the two green states.

The agent, in this problem, is supplied with a collection of expert policies $e_1,e_2,e_3,e_4$, and no knowledge of the underlying dynamics. Each expert policy was trained on an MDP with the same structure, but with the actions leading to movements in different directions. For example, for experts $e_1$ and $e_2$, the action $1$ corresponds to a movement `up' in the grid, while for experts $e_3$ and $e_4$, action $3$ corresponds to a movement `up' in the grid. The map from actions to movement that each expert was trained under is shown in Figure \ref{fig:setup}B.

Given this grid-world MDP and collection of experts $\Ec=\{e_1,e_2,e_3,e_4\}$, we now run the UCB algorithm with the confidence bounds and sequence $\{T_n,n\ge0\}$ used in the derivation of the regret bound in Section \ref{sec:problem}. For ease of reference we rewrite them here:

\[ T_n=T_0+cn\]
\[ c_{k,n}^e=\frac{K_e}{T_0}+\sqrt{\frac{8\log{n}}{k}}\]

We set $c=0.1$ for all the experiments and keep $T_0$ as a variable for now. Given that we have full information over the experts and the MDP in this case, we can also calculate $K_e$, for each expert: 

\[K_e=\frac{2}{1-\alpha_e}\]

Where $\alpha_e$ is the second largest eigenvalue of the probability transition kernel of the markov chain induced by the policy of expert $e$.

We note that the MDP is simple by construction, and we use it more for illustrative purposes than to show the algorithm at work on a difficult task. Further, we note that the confidence bounds and sequence of time horizons we use are those used in the derivation of the regret bound. There are choices that give better qualitative results, but for the sake of consistency we decided to stay with the same choices as in Section \ref{sec:problem}.

\subsection{Experimental Results}

Given the setup in the prequel, we now show the expected regret of the UCB algorithm. We run the algorithm 10 times for various values of $T_0$ and plot the average regret in Figure \ref{fig:regret}. The values of $K_e$, $\bar{R}^e$, and $\Delta_e$ for each expert in $\Ec$ are shown in Table \ref{fig:table}. We note that expert $e_1$ is the best expert in the set.

\begin{figure}[!ht]
\begin{center}

\begin{tabular}{|| c | c | c | c|| }
\hline
Expert 		& $K_e$ & $\bar{R}^e$ & $\Delta_e$\\
\hline 
$e_1$       & 2.02 & 0.74 & 0    \\
$e_2$       & 2.03 & 0.03 & 0.71 \\
$e_3$       & 2.00 & 0.08 & 0.66 \\
$e_4$       & 2.00 & 0.09 & 0.65 \\
\hline
\end{tabular}

\end{center}
\caption{A table outlining the values of the variables in the Section \ref{sec:problem} for the experts in $\Ec$.}
\label{fig:table}
\end{figure}

We can clearly see in Figure \ref{fig:regret}, the logarithmic rate of growth of the regret. Further, we can see how long time horizons leads to lower regret. This most likely occurs because the longer time horizon allows the average cumulative reward from each choice of expert to better approximate the average stead-state reward of that expert. We note, however, that in Figure \ref{fig:reward} we see that from the point of view of cumulative reward, the longer initial time horizon leads to a much slower rate of convergence to the optimal average cumulative reward. This mostly likely is due to the fact that listening to the wrong expert results in a higher corresponding loss of reward at each time step, than with smaller time horizons.

\begin{figure}[!h]
\centering
\begin{subfigure}[b]{0.49\textwidth}
  \centering
  \includegraphics[width=\textwidth]{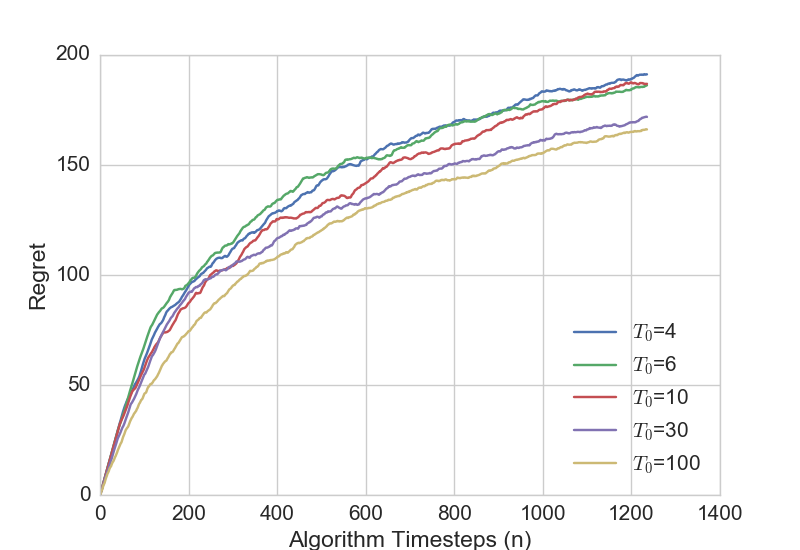}
  \caption{Average Regret averaged over 10 runs with various values of $T_0$}
  \label{fig:regret}
\end{subfigure}%
\hfill
\begin{subfigure}[b]{0.49\textwidth}
  \centering
  \includegraphics[width=\textwidth]{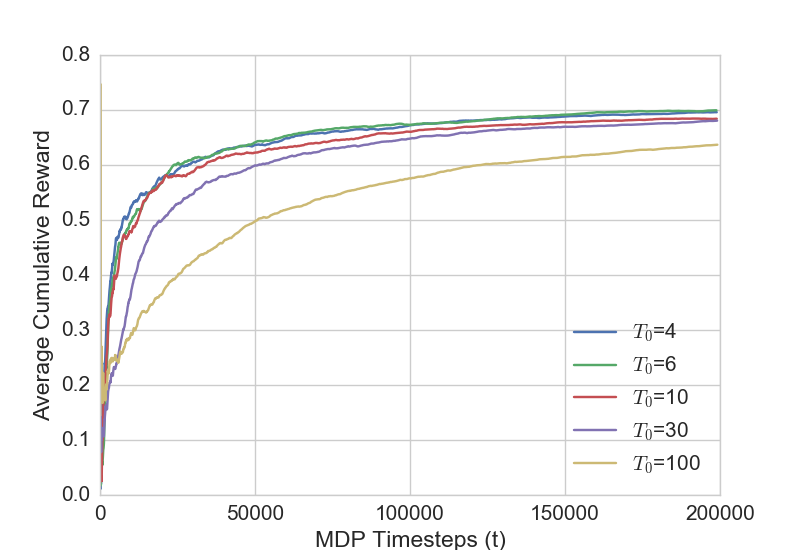}
  \caption{Average Cumulative Reward averaged over 10 runs with various values of $T_0$}
  \label{fig:reward}
\end{subfigure}
\caption{Results from running the UCB algorithm 10 times for various values of $T_0$ and then taking the average.}
\label{fig:results}
\end{figure}

We again note that the time-step of the algorithm is an artificial time-scale imposed over the problem to be able to formulate the problem as an MAB problem. Thus, the trend that a longer time horizon has lower regret after a fixed number of algorithm iterations is not necessarily a valid comparison, since correspondingly more time has elapsed in terms of the MDP. Indeed, given the results in Figure \ref{fig:results}, and with our stated goal of maximizing the cumulative reward, we would, perhaps counter-intuitively, choose the initial time horizon $T_0=4$ despite the fact that it incurs a larger regret in terms of the algorithm. We note however, that lower time horizons inherently give samples with higher variance, meaning that this trend may not hold true for all MDPs. Further, we note that choosing a larger $T_0$ is not asymptotically penalized since it does, eventually, achieve the same average cumulative reward as the lower $T_0$. Finally we note that all choices of $T_0$ have logarithmically growing regret and the average cumulative reward seems to converge, in all cases, to the average steady-state reward of the best expert in the set.

We show in Figure \ref{fig:perturb}, how the UCB algorithm responds to changes in the underlying MDP. At the 5000th iteration of the algorithm, the dynamics of the MDP are changed such that expert $e_1$ is no longer the best expert to listen to. The regret of the UCB algorithm in this situation is plotted in Figure \ref{fig:perturb}. We can see that the algorithm quickly adjusts to a new expert, and the regret continues to grow at the logarithmic rate. 

\begin{figure}[!h]
 \centering
 \includegraphics[width=0.8\textwidth]{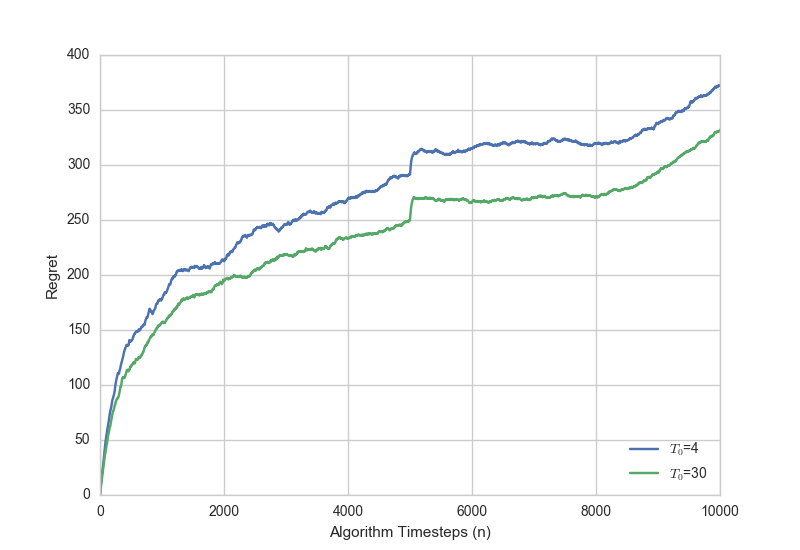}
 \caption{Regret of the UCB algorithm when there is a change in the dynamics of the underlying MDP.}
\label{fig:perturb}
\end{figure}

\section{Conclusion}
\label{sec:conclusion}
% !TEX root = 2017-ArXiV.tex

In this paper, we have outlined a general formulation for using a multi-armed bandit framework to choose an expert in Markov decision processes. We first outlined the problem of choosing the best expert out of a given set of experts, online, in an MDP. We then derived a basic decomposition for the regret of a multi-armed bandit algorithm applied to this problem. The regret decomposition was in the form of an upper bound on the expected regret of an algorithm. We defined regret as the difference between the cumulative average reward of the algorithm and the expected steady-state reward of the best expert in the set. We showed that under certain regularity conditions in the Markov chains induced by the policies of each expert we could upper bound the expected regret by a term depending linearly on the number of times a suboptimal expert has been chosen, and another term that depends only on the sequence of time horizons that the experts are listened to. We finally adapted the classical Upper Confidence Bounds algorithm to this new situation and showed that the regret was $O(\log(n))$. We then tested the performance of the algorithm on a GridWorld with unknown dynamics. We showed both the effects of varying the initial time horizon $T_0$ in the UCB algorithm, as well as the logarithmically growing regret. We further showed, qualitatively, how the UCB algorithm could rapidly respond to changes in the environment.

In future work, we hope to explore different multi-armed bandit algorithms that can have better performance in this setting. For example, MAB algorithms that take into account time-varying dynamics in the distributions of each slot machine may be more robust to changes in the underlying MDP. Further, we hope to apply this approach to real-life systems where maintaining good performance is critical. Finally, the regret bound derived for the UCB algorithm is loose, and a tighter bound could be more useful in practical situations.

%%%%%%%%%%%%%%%%%%%%%%%%%%%%%%%%%%%%%%%%%%%%%%%%%%%%%%%%%%%%%%%%%%%%%%%%%%%%%%%%
%\section{ACKNOWLEDGMENTS}
%
%The authors gratefully acknowledge the contribution of National Research Organization and reviewers' comments.
\appendix
\section{Proof of Corollary \ref{corr:mc_bnd}}
\label{sec:cor1}
% !TEX root = 2017arxivNew.tex

\begin{proof}

To prove Corollary \ref{corr:mc_bnd}, we first define the distribution over states resulting from following the stationary policy of expert $e$ for $i$ time steps and starting from state $s_{t_n}$.

We first define:
\[
\mu_{t_n}(s) = \I\{s = s_{t_n}\}
\]
This is the degenerate distribution that puts all its mass at the state at time $t_n$, which is given.

%\[
%\mu_{t_n+1}\left(s'\right)=\tilde{P}_e\left(s_{t_n},s'\right)
%\]

Then we can recursively define:

\[
\mu_{t+1}\left(s'\right)=\sum_{s \in \Sc} \tilde{P}_e\left(s,s'\right)\mu_{t}(s) \ \ \forall t = t_n,t_n+1,\dots,t_n+T_n-1
\]

Recall that $t_n+T_n = t_{n+1}$. Also, we take $e = e_{t_n}$. 
Having done this, we now expand and cleverly contract:

\[
\left| \bar R^e-\E\left[ \frac{1}{T_n}\sum_{t = t_n}^{t_{n+1}-1}r_t \middle| s_{t_n},e_{t_n}\right]\right|
\]

{\allowdisplaybreaks
\begin{align*}
&=\left|\frac{1}{T_n} \sum_{t = t_n}^{t_{n+1}-1}
	\left[
	\sum_{s'\in \Sc} \sum_{a\in \Ac}\sum_{s \in \Sc} \int_0^1r R(s,a,s',dr)P(s,a,s')\pi_e(s,a)\left[ \mu_e(s)-\mu_t(s) \right]
	\right]
\right| \\
&=\left|
\frac{1}{T_n} \sum_{t = t_n}^{t_{n+1}-1}
	\left[
	\sum_{s\in \Sc} \left[ \mu_e(s)-\mu_t(s) \right] \sum_{a\in \Ac} \sum_{s' \in \Sc} \int_0^1r R(s,a,s',dr)P(s,a,s')\pi_e(s,a)
	\right]
\right| \\
&\le\frac{1}{T_n} \sum_{t = t_n}^{t_{n+1}-1}
	\sum_{s\in \Sc} 
	\left|
	\left[ \mu_e(s)-\mu_t(s) \right] \sum_{a\in \Ac} \sum_{s' \in \Sc} \int_0^1r R(s,a,s',dr)P(s,a,s')\pi_e(s,a)
	\right|\\
&=\frac{1}{T_n} \sum_{t = t_n}^{t_{n+1}-1}
	\sum_{s\in \Sc} 
	\left| \mu_e(s)-\mu_t(s) \right|
	\left|
		\sum_{a\in \Ac} \sum_{s' \in \Sc} \int_0^1r R(s,a,s',dr)P(s,a,s')\pi_e(s,a)
	\right|\\
&\leq \frac{1}{T_n} \sum_{t = t_n}^{t_{n+1}-1}
	\sum_{s\in \Sc} 
	\left| \mu_e(s)-\mu_t(s) \right| =
\frac{1}{T_n} \sum_{t = t_n}^{t_{n+1}-1}
	\sum_{s\in \Sc} 
	\left| \mu_e(s)-\sum_{\tilde s \in \Sc} \tilde P_e^{t-t_n}(\tilde s,s)\mu_{t_n}(\tilde s) \right| \\
&=\frac{1}{T_n} \sum_{t = t_n}^{t_{n+1}-1}
	\sum_{s\in \Sc} 
	\left| \mu_e(s)-\tilde P_e^{t-t_n}(s_{t_n},s)\mu_{t_n}(s_{t_n}) \right|\\
&=\frac{1}{T_n} \sum_{t = t_n}^{t_{n+1}-1}
	\left\| \mu_e(s)-\tilde P_e^{t-t_n}(s_{t_n},s)\mu_{t_n}(s_{t_n}) \right\|_1\\
&\leq \frac{1}{T_n} \sum_{t = t_n}^{t_{n+1}-1}
	C_e \alpha_e^{t-t_n}\\
&= \frac{C_e (1 - \alpha_e^{T_n})}{T_n(1-\alpha_e)}
\end{align*}
}

%\[
%\frac{1}{T_n} \sum_{t = t_n}^{t_{n+1}-1}
%	\sum_{s\in \Sc} 
%	\left| \mu_e(s)-\mu_t(s) \right|
%	\sum_{a\in \Ac} \sum_{s' \in \Sc} 
%	\left|
%		\int_0^1r R(s,a,s',dr)P(s,a,s')\pi_e(s,a)
%	\right| \leq
%\]

% \[
% \frac{1}{T_n} \sum_{t = t_n}^{t_{n+1}-1}
% 	\sum_{s\in \Sc} 
% 	\left| \mu_e(s)-\mu_t(s) \right| =
% \frac{1}{T_n} \sum_{t = t_n}^{t_{n+1}-1}
% 	\sum_{s\in \Sc} 
% 	\left| \mu_e(s)-\sum_{\tilde s \in \Sc} \tilde P_e^{t-t_n}(\tilde s,s)\mu_{t_n}(\tilde s) \right| =
% \]
% \[
% \frac{1}{T_n} \sum_{t = t_n}^{t_{n+1}-1}
% 	\sum_{s\in \Sc} 
% 	\left| \mu_e(s)-\tilde P_e^{t-t_n}(s_{t_n},s)\mu_{t_n}(s_{t_n}) \right| =
% \frac{1}{T_n} \sum_{t = t_n}^{t_{n+1}-1}
% 	\left\| \mu_e(s)-\tilde P_e^{t-t_n}(s_{t_n},s)\mu_{t_n}(s_{t_n}) \right\|_1 \leq
% \]
% \[
% \frac{1}{T_n} \sum_{t = t_n}^{t_{n+1}-1}
% 	C_e \alpha_e^{t-t_n} = 
% \frac{C_e (1 - \alpha_e^{T_n})}{T_n(1-\alpha_e)}
% \]

The first equality is simply writing out the expectation in terms of the corresponding kernels, and follows by definition. The second line is a simple rearrangement by Fubini's theorem. The third line is a triangle inequality; the fourth is writing the modulus of a product in terms of the modulus of the individual terms. The fifth is due to the fact that the expected reward is always less than 1, even conditioned on $s$ and averaged across $a, s'$.

\end{proof}

%%%%%%%%%%%%%%%%%%%%%%%%%%%%%%%%%%%%%%%%%%%%%%%%%%%%%%%%%%%%%%%%%%%%%%%%%%%%%%%%

\bibliographystyle{plainnat}
\bibliography{DONG_ROY-refs,2017arxiv_refs}

\end{document}